\newtheorem{corollary}{Corollary}
\newtheorem{theorem}{Theorem}
\newtheorem{lemma}{Lemma}
\newtheorem{fact}{Fact}
\newcommand{\Geqt}{\ensuremath{G_\text{eqt}}}
\newcommand{\Veqt}{\ensuremath{V_\text{eqt}}}
\newcommand{\Eeqt}{\ensuremath{E_\text{eqt}}}
\newtheorem{claim}{Claim}
\newcommand{\junk}{}
\journal{Theoretical Computer Science}
\begin{document}
\begin{frontmatter}



\title{Universal Coating for Programmable Matter}


\author[asu]{Zahra Derakhshandeh\corref{cor1}\fnref{fn1}}
\ead{zderakhs@asu.edu}

\author[upb]{Robert Gmyr\fnref{fn2}}
\ead{gmyr@mail.upb.de}

\author[asu]{Andr\'ea W.\ Richa\fnref{fn1}}
\ead{aricha@asu.edu}

\author[upb]{Christian Scheideler\fnref{fn2}}
\ead{scheideler@upb.de}

\author[upb]{Thim Strothmann\fnref{fn2}}
\ead{thim@mail.upb.de}

\address[asu]{Computer Science, CIDSE, Arizona State University, USA}
\address[upb]{Department of Computer Science, Paderborn University, Germany}

\cortext[cor1]{Corresponding author}

\fntext[fn1]{Supported in part by the NSF under Awards CCF-1353089 and CCF-1422603.}
\fntext[fn2]{Supported in part by DFG grant SCHE 1592/3-1.}

\begin{abstract}
The idea behind universal coating is to have a thin layer of a specific
substance covering an object of any shape so that one can measure a certain
condition (like temperature or cracks) at any spot on the surface of the
object without requiring direct access to that spot. We study the universal
coating problem in the context of {\em self-organizing programmable matter}
consisting of simple computational elements, called {\em particles},
that can establish and release bonds and can actively move in a self-organized
way.
Based on that matter, we present a {\em worst-case work-optimal universal coating
algorithm} that uniformly coats any object  of arbitrary shape and size that allows a uniform coating. Our particles are anonymous,
do not have any global information, have constant-size memory, and utilize
only local interactions.
\end{abstract}

\begin{keyword}
Programmable Matter \sep Self-Organizing Particle Systems \sep Object Coating


\end{keyword}

\end{frontmatter}

\section{Introduction}

Today, engineers often need to visually inspect bridges, tunnels, wind
turbines and other large civil engineering structures for defects --- a task
that is both time-consuming and costly. In the not so distant future,
{\em smart coating} technology could do the job faster and cheaper, and increase safety at
the same time. The idea behind smart coating (also coined {\em smart paint})
is to have a thin layer of a specific substance covering the object so that
one can measure a certain condition (like temperature or cracks) at any spot
on the surface of the object without requiring direct access to that spot.
Also in nature, smart coating occurs in various situations. Prominent examples
are proteins closing wounds, antibodies surrounding bacteria, or ants
surrounding food in order to transport it to their nest. So one can envision a
broad range of coating applications for programmable matter in the future.
We intend to study coating problems in the context of self-organizing
programmable matter consisting of simple computational elements, called
particles, that can establish and release bonds and can actively move in a
self-organized way. As a basic model for these self-organizing particle
systems, we will use the geometric version of the amoebot model presented in~\cite{DNA,spaa-ba14}.

\subsection{Amoebot model} \label{sec:model}

We assume that any structure the particle system can form can be represented
as a subgraph of an infinite graph $G=(V,E)$ where $V$ represents all possible
positions the particles can occupy relative to their structure, and $E$
represents all possible atomic transitions a particle can perform 
 as well as all places where neighboring particles can bond to each other.
In the {\em geometric amoebot model}, we assume that $G = \Geqt$, where $\Geqt =
(\Veqt, \Eeqt)$ is the infinite regular triangular grid graph, see
Figure~\ref{fig:graph}(a).

\begin{figure*}
  \centering
  \includegraphics{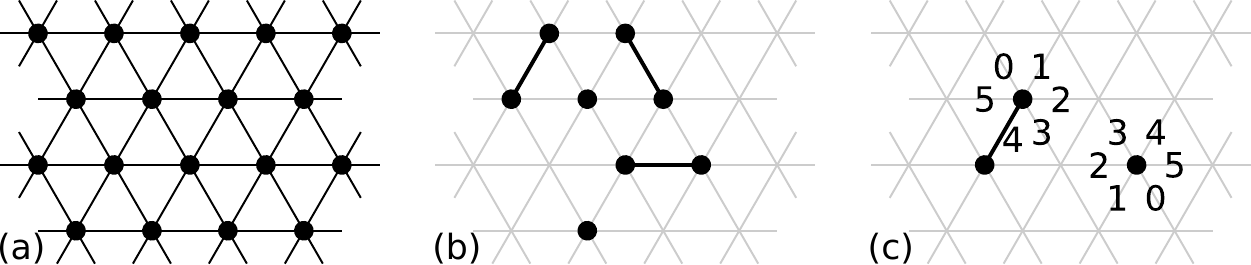}
  \vspace{-.15in}
  \caption{
    \small
    (a) shows a section of $\Geqt$. Nodes of $\Geqt$ are shown as black circles.
    (b) shows five particles on $\Geqt$. The underlying graph $\Geqt$ is depicted as a gray mesh.
    A particle occupying a single node is depicted as a black circle,
    and a particle occupying two nodes is depicted as two black circles connected by an edge.
    (c) depicts two particles occupying two non-adjacent positions on $\Geqt$.
    The particles have different offsets for their head port labelings.
  }
  \label{fig:graph}
\end{figure*}

We briefly recall the main properties of the geometric amoebot model. Each
particle occupies either a single node or a pair of adjacent nodes in $\Geqt$,
and every node can be occupied by at most one particle. Two particles
occupying adjacent nodes are \emph{connected} by a \emph{bond}, and we refer
to such particles as \emph{neighbors}. The bonds do not just ensure that the
particles form a connected structure but they are also used for exchanging
information as explained below.

Particles move through \emph{expansions} and \emph{contractions}: If a
particle occupies one node (i.e., it is \emph{contracted}), it can expand to
an unoccupied adjacent node to occupy two nodes. If a particle occupies two
nodes (i.e., it is \emph{expanded}), it can contract to one of these nodes to
occupy only a single node.  Figure~\ref{fig:graph}(b)  illustrates a set of
particles (some contracted, some expanded) on the underlying graph $\Geqt$.
For an expanded particle, we denote the node the particle last expanded into
as the \emph{head} of the particle and call the other occupied node its
\emph{tail}.
A \emph{handover} allows particles to stay connected as they move. Two
scenarios are possible here: (1) a contracted particle $p$ can ``push'' a
neighboring expanded particle $q$ and expand into the neighboring node
previously occupied by $q$, forcing $q$ to contract, or (2) an expanded
particle $p$ can ``pull'' a neighboring contracted particle $q$ to node $v$ it
occupies  thereby causing $q$ to expand into $v$, which allows $p$ to
contract.

Particles are \emph{anonymous} but each particle has a collection of {\em
ports}, one for each edge incident to the nodes occupied by it, that have
unique labels. Adjacent particles establish bonds through the ports facing
each other. We also assume that the particles have a common {\em chirality},
i.e., they all have the same notion of {\em clockwise (CW) direction}, which
allows each particle $p$ to order its head port labels in clockwise order.
However, particles do not have a common sense of orientation since they can
have different offsets of the labelings, see Figure~\ref{fig:graph}(c).
W.l.o.g.\footnote{Without loss of generality.}, we assume that each particle
labels its head ports from $0$ to $5$ in clockwise order. Whenever a particle
$p$ is connected to a particle $q$, we assume that $p$ knows the label of
$q$'s bond that $p$ is connected with.

Each particle has a constant-size shared local memory
that can be read and written to by any neighboring particle. This allows a
particle to exchange information with a neighboring particle by simply writing it into
the other particle's memory.\footnote{In~\cite{DNA,spaa-ba14}, the model was presented as having a shared memory for each port that is visible only to the respective neighbor: The two variants of the model are equivalent, in the sense that they can emulate each other trivially; we adopt the one here for convenience.}
A particle always knows whether it is contracted or expanded, and in the
latter case it also knows along which head port label it is expanded.
W.l.o.g.~we assume that this information is also available to the neighboring
particles (by publishing that label in its local shared memory).
Particles do not know the total number of particles, nor do they have any
estimate on this number.

We assume the standard asynchronous model from distributed computing, where
the particle system progresses through a sequence of {\em particle
activations}, i.e., only one particle is active at a time.
Whenever a particle is activated, it can perform an
arbitrary bounded amount of computation (involving its local memory as well as
the shared memories of its neighbors)
and at most one movement. A {\em round} is defined as the elapsed time until each particle has been activated at least once.

 We count time according to the number of particle activations that have
already happened since the start of the activation sequence. We assume the
activation sequence to be {\em fair}, i.e., at any point in time, every
particle will eventually be activated. The \emph{configuration} $C$ of the
system at the beginning of time $t$ consists of the nodes in $\Geqt$ occupied
by the object and the set of particles; in addition, for every particle $p$, $C$ contains the current state of $p$, including
whether the particle is expanded or contracted, its port labeling, and the
contents of its local memory. The {\em work} spent by the particles till time
$t$ is measured by the number of movements they have done until that point. (We ignore
other state changes since their energy consumption should be irrelevant
compared to the energy for a movement.) For more details on the model, please
refer to~\cite{DNA}.


\subsection{Universal Coating Problem}
\label{sec:problem}

For any two nodes $v,w \in \Veqt$, the {\em distance} $d(v,w)$ between $v$ and
$w$ is the length of the shortest path in $\Geqt$ from $v$ to $w$. The
distance $d(v,U)$ between a $v \in \Veqt$ and $U \subseteq \Veqt$ is defined
as $\min_{w \in U} d(v,w)$.

In the {\em universal coating problem} we are given an instance $(P,O)$ where
$P$ represents the particle system and $O$ the fixed object to be coated.
Let $V(P)$ be the set of nodes occupied by $P$ and $V(O)$ be the set of
nodes occupied by $O$ (when clear from the context, we may omit the $V(\cdot)$ notation). We call the set of nodes in $\Geqt$ neighboring $O$ the {\em surface (coating) layer}.
 Let $n$ be the number of particles and $B$ be the
number of nodes in the surface layer.
An instance is called {\em valid} if the
following properties hold:
\begin{enumerate}
\item The particles are all contracted and start in an \emph{idle} state.

\item The subgraphs of $\Geqt$ induced by $V(O)$ and $V(P) \cup V(O)$, respectively,
are connected, i.e., we are dealing with a single object and the particle
system is connected to the object.

\item The subgraph of $\Geqt$ induced by $\Veqt \setminus V(O)$ is
connected, i.e., the object $O$ does not contain any holes.\footnote{If $O$ does
contain holes, we consider the subset of particles in each connected region of
$\Veqt \setminus V(O)$ separately.}

\item \label{item:noHoleProp} $\Veqt \setminus V(O)$ is $2(\lceil\frac{n}{B}\rceil +1)$-connected. In other
words, $O$ cannot form {\em tunnels} of width less than
$2(\lceil\frac{n}{B}\rceil +1)$.
\end{enumerate}

Note that a
width of at least $2 \lceil\frac{n}{B}\rceil$ is needed to guarantee that the object can be evenly coated.
See Figure~\ref{fig:tunnel} for an example of an object with a tunnel of width $1$.
Since coating narrow tunnels requires specific technical mechanisms that complicate the protocol and do not add much to the basic idea of coating, we decided to ignore narrow tunnels completely in favor of a clean presentation.

\begin{figure}
  \centering
  \includegraphics{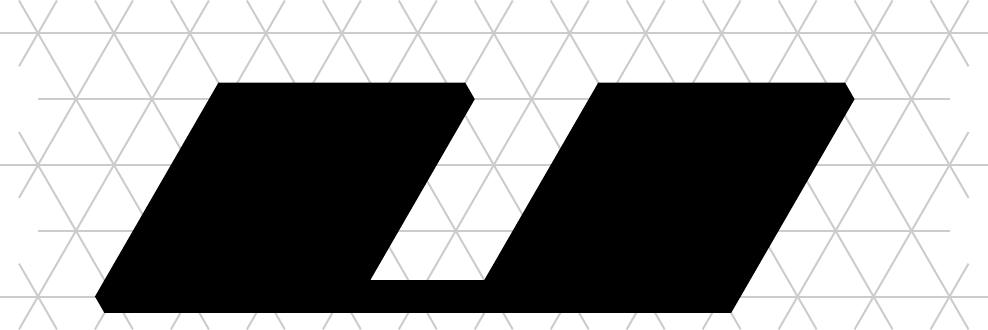}
  \caption{
    An example of an object with a tunnel of width 1.
  }
  \label{fig:tunnel}
\end{figure}

A configuration $C$ is {\em legal} if and only if all particles are contracted
and
\[
  \min_{v \in \Veqt \setminus (V(P) \cup V(O))} d(v,V(O)) \ge \max_{v \in V(P)}
  d(v,V(O))
\]
i.e., the particles are as close to the object as possible, which means that
they {\em coat $O$ as evenly as possible}.

An algorithm \emph{solves} the universal coating problem if, starting from any valid configuration, it  reaches a {\em stable legal configuration} $C$ in a finite  number of rounds.  A configuration $C$ is said to be stable if no
particle in $C$ ever performs a state change or movement.

\subsection{Our Contributions}
\label{sec:contr}
Our main contribution in this paper is a {\em worst-case work-optimal} algorithm
for the universal coating problem on self-organizing particle systems.
Our {\em Universal Coating Algorithm}
seamlessly adapts to any valid object $O$,
uniformly coating the object by forming multiple coating layers if
necessary. As stated in Section~\ref{sec:model}, our particles are anonymous,
do not have any global information (including on the number of particles $n$),
have constant-size memory, and utilize only local interactions.

Our algorithm builds upon many primitives, some of which may be of interest on
their own: The {\em spanning forest} primitive organizes the particles into a
spanning forest which is used to guide the movement of particles while preserving connectivity in the system; the {\em
complaint-based coating} primitive allows the first layer to form, only
expanding the coating of the first layer as long as there is still room and
there are particles still not touching the object; the {\em general layering}
primitive allows the layer $\ell$ to form only after layer $\ell-1$ has been
completed, for $\ell\geq 2$; and a {\em
node-based leader election} primitive elects a position (in $\Geqt$) to house a leader particle, which is used to jumpstart the general layering process. One of the main
contributions of our work is to show how these primitives can be integrated
in a seamless way, with no underlying synchronization mechanisms.

\subsection{Related work}

Many approaches have already been proposed that can potentially be used for
smart coating. One can distinguish between active and passive systems. In
passive systems the particles either do not have any intelligence at all (but
just move and bond based on their structural properties or due to chemical
interactions with the environment), or they have limited computational
capabilities but cannot control their movements. Examples of research on
\emph{passive systems} are DNA self-assembly systems (see, e.g., the surveys
in~\cite{doty2012,patitz2014,Woods2013intrinsic}), population protocols
\cite{AAD+06}, and slime molds~\cite{BMV12,LTT+10}. We will not describe these
models in detail since we are focusing on active systems. In \emph{active
systems}, computational particles can control the way they act and move in
order to solve a specific task. Robotic swarms, and modular robotic systems
are some examples of active programmable matter systems.

Especially in the area of \textit{swarm robotics} the problem of coating
objects has been studied extensively. In swarm robotics, it is usually assumed
that there is a collection of autonomous robots that have limited sensing,
often including vision, and communication ranges, and that can freely move in
a given area. However, coating of objects is commonly not studied as a
stand-alone problem, but is part of \emph{collective transport}
(e.g.,~\cite{WilsonPKBPB14}) or \emph{collective perception} (see respective
section of~\cite{BrambillaFBD13,navarro2012introduction} for a summary of
results). In collective transport a group of robots has to cooperate in order
to transport an object. In general, the object is heavy and cannot be moved by
a single robot, making cooperation necessary. In collective perception, a
group of robots with a local perception each (i.e., only a local knowledge of
the environment), aims at joining multiple instances of individual perceptions
to one big global picture (e.g. to collectively construct a sort of map). Some
research focuses on coating objects as an independent task under the name of
\emph{target surrounding} or \emph{boundary coverage}. The techniques used in
this context include stochastic robot behaviors~\cite{KumarB14,pavlic2013enzyme},
rule-based control mechanisms~\cite{BlazovicsCFC12} and potential field-based
approaches~\cite{BlazovicsLF12}.
Surveys of recent results in swarm robotics can be found
in~\cite{Ker12,McL08,BrambillaFBD13,navarro2012introduction}; other samples of
representative work can be found in
e.g.,~\cite{AR10,CP08,DFSY10,DS08,HABFM02}. While the analytic techniques
developed in the area of swarm robotics and natural swarms are of some
relevance for this work, the individual units in those systems have more
powerful communication and processing capabilities than the systems we
consider, and they can move freely.

In a recent paper~\cite{MichailS14}, Michail and Spirakis propose a model
for network construction that is inspired by population
protocols~\cite{AAD+06}. The population protocol model relates to self-organizing particles systems, but is also intrinsically different: agents (which would correspond to our particles) freely move in space and can establish connections to any other agent in the system at any point in time, following the respective probabilistic distribution. In the paper the authors focus on network
construction for specific topologies (e.g., spanning line, spanning star,
etc.). However, in principle, it would be possible to adapt their approach also for
studying coating problems under the population protocol model.

\subsection{Structure of the paper}

Section~\ref{sec:algo} describes our Universal Coating algorithm. Formal correctness and  worst-case work analyses of the algorithm follow in Section~\ref{sec:analysis}. We address some applications of our universal coating algorithm in Section~\ref{sec:applications}, and present our concluding remarks in
Section~\ref{sec:concl}.

\section{Universal Coating Algorithm}
\label{sec:algo}
In this section we present our Universal Coating algorithm:
In Section~\ref{subsec:Pre}, we introduce some preliminary notions;
Section~\ref{subsec:CoatingPrimitives} introduces the  algorithmic primitives used for the coating algorithm; and lastly
Section~\ref{subsec:leaderElection} focuses on the leader election process that is needed in certain instances of the problem.

\subsection{Preliminaries}
\label{subsec:Pre}



We define the set of \emph{states} that a particle can be in as
\emph{idle}, \emph{follower}, \emph{root}, and  \emph{retired}.
In addition to its state, a particle may maintain a constant number of \emph{flags}
(constant size pieces of information to be read by neighboring particles).
While particles are anonymous, when a particle $p$ sets a flag of type $x$ in its shared memory, we will denote it by $p.x$ (e.g., $p.parent$, $p.dir$, etc.), so that ownership of the respective flag becomes clear.
In our proposed algorithm, we assume that every time a particle contracts, it contracts out of its tail.
Therefore, a node occupied by the head of a particle $p$ still is occupied by $p$ after a contraction.

We define a \emph{layer} as the set of nodes $v$ in $\Geqt$ that are equidistant to the object $O$.
More specifically a node $v$ is in layer $\ell$ if $d(v, V(O)) = \ell$; in particular the surface coating layer defined earlier corresponds to layer 1.
Any root or retired particle $p$ stores a flag $p.layer$
indicating the layer number of the node occupied by the head of $p$.
We say a layer is \emph{filled} or \emph{complete} if all nodes in that layer are occupied with retired particles.
In order to respect the particles' constant-size memory constraints, we take all layer numbers modulo $4$.
However, for ease of presentation,
we will omit the modulo 4 computations in the text, except for in the pseudocode description of the algorithms.

Each root particle $p$ has a flag storing a port label $p.down$
pointing to an occupied node adjacent to its head in layer $p.layer - 1$ or in the object if $p-layer=1$.
Moreover, $p$ has two additional flags, $p.CW$ and $p.CCW$, which are also port labels.
Intuitively, if $p$ continuously moves by expanding in direction $p.CW$ (resp., $p.CCW$) and then contracting,
it moves along a \emph{clockwise} (resp. \emph{counter-clockwise}) path around the connected structure
consisting of the object and retired particles.
Formally, $p.CW$ is the label of the first port to a node $v$ in \emph{counter-clockwise (CCW) order} from $p.down$
such that either $v$ is
occupied by a particle $q$
with $q.layer = p.layer$,
or $v$ is unoccupied (in the latter, $v$ may be a node on layer $p.layer$ or $p.layer -1$).
We define $p.CCW$ analogously, following a \emph{clockwise (CW) order} from $p.down$. Figure~\ref{fig:coatingLayers} illustrates the different layers around an object, and also  CW and CCW traversals of those.

\begin{figure}
   \centering
   \includegraphics[width = 0.6\textwidth]{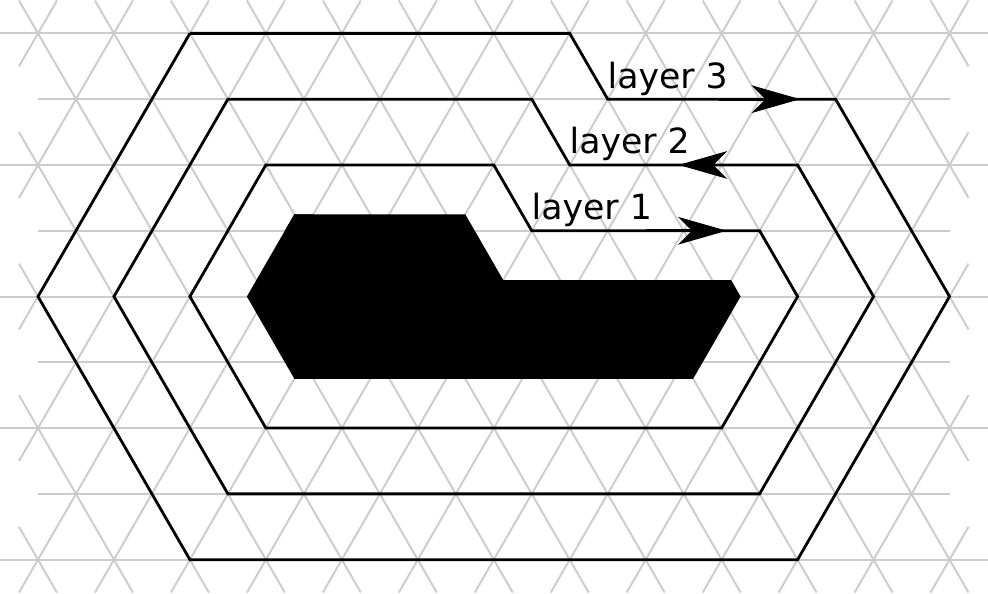}
     \caption{
     We illustrate the first three coating layers with respect to the given object (represented by the nodes in $\Geqt$ shaded in black); we also illustrate the direction in which these layers will be filled by our algorithm --- CW for odd layers, and CCW for even layers ---  as determined in Section~\ref{subsec:CoatingPrimitives}.
   }
   \label{fig:coatingLayers}
\end{figure}

\subsection{Coating Primitives}
\label{subsec:CoatingPrimitives}
Our algorithm can be decomposed into a set of primitives, which are all concurrently executed by the particles, as we briefly described in Section~\ref{sec:contr}.
Namely the algorithm relies on the following key primitives:
the \emph{spanning forest primitive},
the \emph{complaint-based coating primitive} used to establish the first layer of coating,
the \emph{general layering primitive},
and a \emph{node-based} (rather than particle-based) \emph{leader election} primitive
that works even as particles move, and that is used to jumpstart the general layering primitive.
One of the main contributions of our work is to show how these primitives can be put to work together
in a seamless way and with no underlying synchronization mechanisms.\footnote{A video illustrating a fully asynchronous execution of our universal coating algorithm can be found in~\cite{sops-webpage}.}

The \textbf{\emph{spanning forest primitive}} (Algorithm~\ref{alg:spanningForestAlgorithm})
organizes the particles in a spanning forest, in which the roots of the trees will be in state {\em root} and will determine the direction of movement which is specified by a port label $p.dir$;
the remaining non-retired particles follow the root particles 
using handovers. The main benefit of organizing the particles in a spanning forest connected to the surface is that it provides a relatively easy mechanism for particles to move, following the tree paths, while maintaining connectivity in the system (see~\cite{DNA, NANOCOM} for more details). 
All particles are initially \emph{idle}.
A particle $p$ becomes a \emph{follower} when it sets a flag $p.parent$ corresponding to the port
leading to its parent in the spanning forest
(any adjacent particle $q$ to $p$ can then easily check if $q$ is a child of $p$).
As the root particles find final positions according to the partial coating of the object,
they stop moving and become retired.
Namely, a root particle $p$ becomes \emph{retired}
when it encounters another retired particle across the direction $p.dir$.

\begin{algorithm}
  A particle $p$ a acts depending on its state as described below: \\
  \begin{tabularx}{\textwidth}{lX}
    \textbf{idle}: &
    If $p$ is connected to the object $O$, it becomes a \emph{root} particle, makes the current node it occupies a \emph{leader candidate position}, and starts running the leader election algorithm described in Section~\ref{subsec:leaderElection}.
    If $p$ is connected to a \emph{retired} particle, $p$ also becomes a \emph{root} particle.
    If an adjacent particle $p'$ is a root or a follower,
    $p$ sets the flag $p.parent$ to the label of the port to $p'$,
    puts a \emph{complaint flag} in its local memory, and becomes a \emph{follower}.
    If none of the above applies, $p$ remains idle.
    \\ \\

    \textbf{follower}: &
    If $p$ is contracted and connected to a retired particle or to $O$, then $p$ becomes a \emph{root} particle.
    Otherwise, if $p$ is expanded, it considers the following two cases:
    $(i)$ if $p$ has a contracted child particle $q$, then $p$ initiates {\sc Handover$(p)$};
    $(ii)$ if $p$ has no children and no idle neighbor, then $p$ contracts.
    Finally, if $p$ is contracted, it runs the function {\sc ForwardComplaint$(p,p.parent)$} described in Algorithm~\ref{alg:complaint}.
    \\ \\

    \textbf{root}: &
    If particle $p$ is on the surface coating layer, $p$ participates in the leader election process described in Section~\ref{subsec:leaderElection}.    If $p$ is contracted, it first executes  {\sc MarkerRetiredConditions$(p)$} (Algorithm~\ref{alg:retiredCondition}), and becomes {\em retired}, and possibly also a {\em  marker}, accordingly; if $p$ does not become retired, it calls {\sc LayerExtension $(p)$} (Algorithm~\ref{alg:boundaryDirectionAlgorithm}). If $p$ is expanded, it considers the following two cases:
    $(i)$ if $p$ has a contracted child, then
    $p$ initiates {\sc Handover$(p)$};
        $(ii)$ if $p$ has no children
    and no idle neighbor, then $p$ contracts.
    Finally, if $p$ is contracted, it runs {\sc ForwardComplaint$(p,p.dir)$} (Algorithm~\ref{alg:complaint}).
    \\ \\

    \textbf{retired}: &
    $p$ clears a potential complaint flag from its memory and performs no further action.
    \\
  \end{tabularx}

  \caption{Spanning Forest Primitive}
  \label{alg:spanningForestAlgorithm}
\end{algorithm}

Recall that $B$ denotes the number of nodes on the surface coating layer (layer 1).
We need to ensure that once $\min\{n,B\}$ particles are on layer~$1$, they stop moving and the coating is complete, independent of how $B$ compares to $n$ (i.e., whether $n\leq B$ or not); in addition, we would like to efficiently coat one more surface scenario, namely that of coating just a bounded segment of the surface, as we explain in Section~\ref{sec:applications}.
In order to be able to seamlessly adapt to all possible coating configurations, we use our novel \textbf{\emph{complaint-based coating primitive}} for the first layer,
which basically translates into having the root particles (touching the object) open up one more position on layer~$1$  only if there exists a follower particle that remains in the system.
This is accomplished by having each particle that becomes a follower
generate a \emph{complaint flag},
which will be forwarded by particles in a pipeline fashion from children to parents through the spanning forest
and then from a root $q$ to another root at $q.dir$,
until it arrives at a root particle $p$ with an unoccupied neighboring node at $p.dir$ (we call such a particle $p$ a {\em super-root}).
Upon receiving a complaint flag, a super-root $p$ consumes the flag and expands into the unoccupied node at $p.dir$.
The expansion will eventually be followed by a contraction of $p$,
which will induce a series of expansions and contractions of the particles
on the path from $p$ to a follower particle $z$, eventually freeing a position on the surface coating layer to be taken by $z$.
In order to ensure that the consumption of a complaint flag will indeed
result in one more follower touching the object,
one must give higher priority to a follower child particle in a handover operation,
as we do in Algorithm~\ref{alg:handover}.
The complaint-based coating phase of the algorithm will terminate either once all complaint flags are consumed
or when layer~$1$ is filled with contracted particles.
In either case, the particles on layer~$1$ will move no further. Figure~\ref{fig:tokens} illustrates the complaint-based coating primitive.

\begin{figure}
  \centering
  \includegraphics{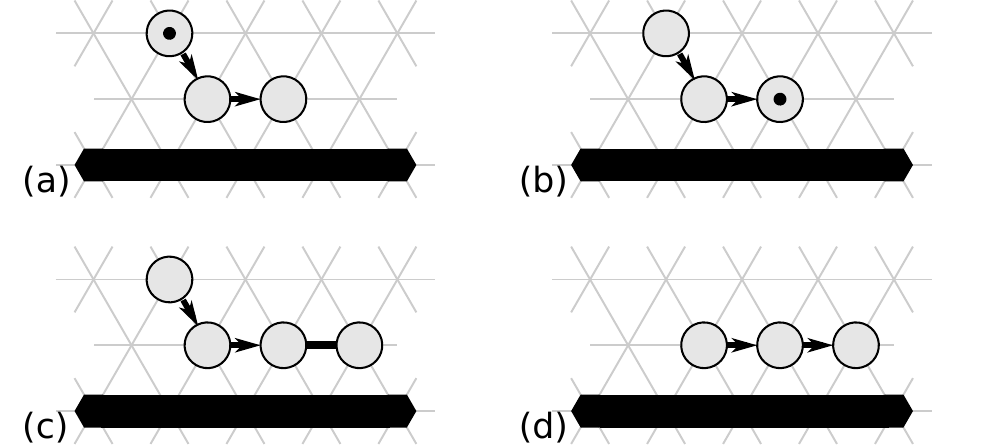}
  \caption{
    Complaint-based coating primitive:
    Particles are shown as grey circles.
    In (a), a follower particle generates a complaint flag (depicted as a black dot within the particle) that is then forwarded to a super-root (b) causing the super-root to expand into an unoccupied node (c). After a series of handovers, the follower particle that generated the complaint flag can move to a position on the surface (d).
  }
  \label{fig:tokens}
\end{figure}

\begin{algorithm}
  \begin{algorithmic}[1]
    \If {$p.layer = 1$ and $p$ has a follower child $q$}
      \If {$q$ is contracted}
        \State $p$ initiates a handover with particle $q$
      \EndIf
    \Else
      \If {$p$ has any contracted (follower or root) child $q$}
        \State $p$ initiates a handover with particle $q$
      \EndIf
    \EndIf
  \end{algorithmic}
  \caption{{\sc Handover} $(p)$}
  \label{alg:handover}
\end{algorithm}

\begin{algorithm}
  \begin{algorithmic}[1]
    \If {$p$ holds a complaint flag \textbf{and} $p$'s parent does not hold a complaint flag}
      \State $p$ forwards the complaint flag to the particle given by $p.parent$
    \EndIf
  \end{algorithmic}
  \caption{{\sc ForwardComplaint$(p,i)$}}
  \label{alg:complaint}
\end{algorithm}

Once layer~$1$ is complete and if there are still follower particles in the system,
the \textbf{\emph{general layering primitive}} steps in, which will build further coating layers.
We accomplish this by electing a \emph{leader marker particle} on layer~$1$
(via the \textbf{\emph{leader election primitive}} proposed in Section~\ref{subsec:leaderElection}).
This leader marker particle will be used to determine a ``beginning" (and an ``end") for layer~$1$
and allow the particles on that layer to start retiring according to the retired condition
given in Algorithm~\ref{alg:retiredCondition}
(the leader marker particle will be the first retired particle in the system).
Once a layer~$\ell$ becomes completely filled with retired (contracted) particles,
a new marker particle will emerge on layer $\ell + 1$, and start the process of building this layer (i.e., start the process of retiring particles on that layer) according to Algorithm~\ref{alg:retiredCondition}.
A marker particle on layer $\ell + 1$ only emerges
if a root particle $p$ connects to the marker particle $q$ on layer $\ell$ via its marker port
and if $q$ verified locally
that layer $\ell$ is completely filled (by checking whether $q.CW$ and $q.CCW$ are both retired).

\begin{figure}
  \centering
  \includegraphics{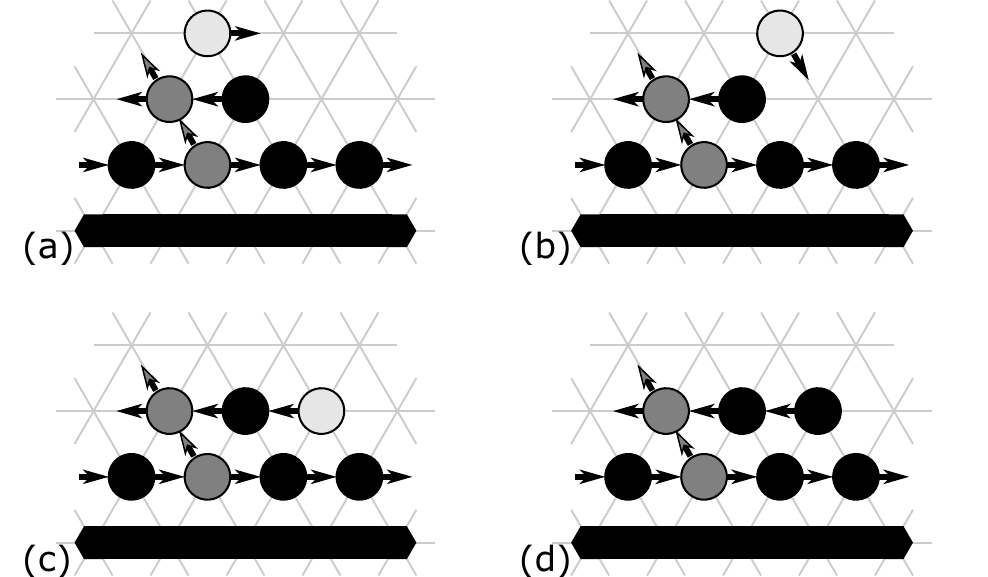} 
  \caption{
	General layering primitive: Retired particles are shown as black circles, other than (retired) marker particles which are shown in dark grey (the dark grey arrows represent the  marker edges); a root particle is depicted in light grey. Black arrows show the current direction of movement  (given by the {\em dir} flag) for each particle (which becomes irrelevant once a particle retires). 
(a) The root particle $p$ is located on layer $\ell =3$;  (b) particle $p$ moves in $CW$ direction over retired particles on layer $\ell-1$; (c) after a series of expansions and contractions following $p.dir$, $p$ arrives at  an unoccupied neighboring node on layer $\ell-1$; (d) since $p.dir$ leads to a retired particle, $p$ retires too. 
     }
  \label{fig:layeringWithMarker}
\end{figure}

With the help of the marker particles --- which can only be established after layer~$1$ was completely filled (and hence, we must have $B\leq n$) ---
we can replace the complaint-based coating algorithm of layer~$1$ with a simpler coating algorithm
for the higher layers, where each root particle $p$ just moves in $CW$ or $CCW$ direction
(depending on its layer number) until $p$ encounters a retired particle on the respective layer and retires itself.
More precisely, each contracted root particle $p$ on layer $\ell$ tries to extend this layer
by expanding into an unoccupied position on layer $\ell$,
or by moving into an unoccupied position in layer $\ell - 1$
(when $p.layer$ will change to $\ell - 1$ accordingly),
following the direction of movement given by $p.dir$. Figure~\ref{fig:layeringWithMarker} illustrates this process. 
The direction $p.dir$ is set to $p.CW$ (resp., $p.CCW$) when $p.layer$ is odd (resp., even), as illustrated in Figure~\ref{fig:coatingLayers}.
Alternating between $CCW$ and $CW$ movements for the particles in consecutive layers ensures
that a layer $\ell$ is completely filled with retired particles
before particles start retiring in layer $\ell + 1$,
which is crucial for the correctness of our layering algorithm.

\begin{algorithm}
  \begin{algorithmic}[1]
    \Statex{\bf Calculating $p.layer$, $p.down$ and $p.dir$}
    \State The layer number of any node occupied by the object is equal to 0.
    \State Let $q$ be any neighbor of $p$ with smallest layer number (modulo $4$).
    \State $p.down \; \gets \; \mbox{$p$'s label for port leading to } q$
    \State $p.layer=(q.layer+1)\mod \: 4 $
    \State {\sc clockwise} $(p, p.down)$ \Comment{Computes CW \& CCW directions}
    \If {$p.layer$ is {\em odd}}
      \State $p.dir  \; \gets \; p.CW$
    \Else
      \State $p.dir \; \gets \; p.CCW$
    \EndIf

    \Statex
    \Statex \textbf{Extending layer $p.layer$}
    \If { the position at $p.dir$ is unoccupied, and either $p$ is not on the first layer,
    or $p$ holds a complaint flag}
      \State $p$ expands in direction $p.dir$
      \State $p$ consumes its complaint flag, if it holds one
    \EndIf
  \end{algorithmic}

  \caption{{\sc LayerExtension $(p)$}}
  \label{alg:boundaryDirectionAlgorithm}
\end{algorithm}

\begin{algorithm}
  \begin{algorithmic}[1]
    \State $j \; \gets \;  i$,  $k \; \gets \; i$
    \While{edge $j$ is connected to the object or to a retired particle with layer number $p.layer-1$}
      \State $j\; \gets \; (j - 1) \mod 6$
    \EndWhile
    \State $p.CW \; \gets \; j$
    \While{edge $k$ is connected to the object or to a retired particle with layer number $p.layer-1$}
      \State $k \; \gets \; (k + 1) \mod 6$
    \EndWhile
    \State $p.CCW \; \gets \; k$
  \end{algorithmic}

  \caption{{\sc Clockwise $(p,i)$}}
  \label{alg:clockwise}
\end{algorithm}

\begin{algorithm}
  \begin{algorithmic}[1]
    \Statex{\bf First marker condition:}
    \If {$p$ is {\em leader particle} }
      \State $p$ becomes a {\em retired}  particle
      \State $p$ sets the flag $p.marker$ to be the label of a port leading to a node guaranteed not to be on layer $p.layer$ --- e.g., by taking the average
      direction of $p$'s two neighbors in layer $1$
      (by now complete)
    \EndIf

    \Statex
    \Statex{\bf Extending Layer Markers:}

    \If {$p$ is connected to a marker $q$ and the port $q.marker$ points towards $p$}
      \If {both $q.CW$ and $q.CCW$ are retired} 
        \State $p$ becomes a {\em retired}  particle
        \State $p$ sets the flag $p.marker$ to the label of the port opposite of the port connecting $p$ to $q$
      \EndIf
    \EndIf

    \Statex
    \Statex{\bf Retired Condition:}
    \If{edge $p.dir$ is occupied by a retired particle}
      \State $p$ becomes {\em retired}
    \EndIf
  \end{algorithmic}

  \caption{{\sc MarkerRetiredConditions$(p)$}}
  \label{alg:retiredCondition}
\end{algorithm}

\subsection{Leader Election Primitive}
\label{subsec:leaderElection}
In this section, we describe the process used for electing a leader among the particles that touch the object. Note that only particles in layer~$1$ will ever participate in the leader election process. A leader will only emerge if $B \le n$; otherwise the process will stop at some point without a leader being elected.
As discussed earlier, a leader is elected on layer~$1$ to provide a ``checkpoint" (a {\em marker} particle) that the particles can use in order to determine whether the layer has been completely filled (and a leader is only elected after this happens).


The leader election algorithm we use in this paper is a slightly modified version of the leader election algorithm presented in~\cite{DNA} that can tolerate particles moving around on the surface layer while the leader election process is progressing (in~\cite{DNA}, leader election runs on a system of static particles). Hence, for the purpose of universal coating, we will abstract the leader election algorithm to conceptually run on the \emph{nodes} in layer~$1$, and not on the particular particles that may occupy these nodes at different points in time. The particles on layer 1 will simply provide the means for running the leader election process on the respective positions, storing and transferring  all the flags (which can be used to implement the tokens described in~\cite{DNA}) that are needed for the leader competition and verification. An expanded particle $p$ on layer 1, whose tail occupies node $v$ in layer~$1$,
that is about to perform a handover with contracted particle $q$ will pass all the information associated with $v$ to $q$ using the particles'  local shared memories.
If a particle $p$ occupying position $v$ would like to forward some leader election information to a node $w$ adjacent to $v$ that is currently unoccupied, it will wait until either $p$ itself expands into $w$, or another particle occupies node $w$. It is important to note that according to the complaint-based coating algorithm that we run on layer~$1$, if a node $v$ in layer~$1$ is occupied at some time $t$, then $v$ will never be left unoccupied after time $t$.

Here we outline the differences between the leader election process used in this paper and that of~\cite{DNA}:

\begin{list}{\labelitemi}{\setlength{\itemsep}{0in}\setlength{\leftmargin}{.2in}}
  \item Only the nodes on layer~$1$ that initially hold particles start as {\em leader node candidates}. Other nodes in layer~$1$ will take part in the leader node election process by forwarding any tokens between two consecutive leader node candidates, as determined for the leader election process for a set of static particles forming a cycle in~\cite{DNA}. Note that layer 1 is a cycle on $\Geqt$.
  \item The leader election process will determine which leader node candidate in layer~$1$ will emerge as the unique {\em leader node}. The {\em leader particle} is then chosen as described below.
  \item If particle $p$ is expanded, it will hold the flags and any other information necessary for the leader election process corresponding to each node $p$ occupies (head and tail nodes) independently. In other words, an expanded particle emulates the leader election process for two nodes on the surface layer.
  \item A particle $p$ occupying node $v$ forwards a flag $\tau$ to the node $w$ in CW (or CCW) direction along the surface layer only if node $w$ is occupied by a particle $q$  (note that $q$ may be equal to $p$, if $p$ is expanded) and $q$ has enough space in its (constant-size) memory associated with node $w$; otherwise $p$ continues to hold the flag $\tau$ in its shared memory associated with node $v$.
  \item If $p$ is expanded along an edge $(v,w)$ and wants to contract into node $w$, there must exist a particle $q$ expanding into $v$ (due to the complaint-based mechanism), and hence $p$ will transfer all of its flags currently associated with node $v$ to particle $q$.
\end{list}

After the solitude verification phase in the leader election algorithm of~\cite{DNA} is complete, there will be just one leader node $v$ left in the system. Once $v$ is elected a leader node, a contracted particle $p$ occupying this position will check if layer $1$ is completely filled with contracted particles. To do so, when a contracted particle $p$ occupies node $v$ it will generate a single {\em CHK} flag which it will forward to its CCW neighbor $q$ {\em only if $q$ is contracted}. Any particle $q$ receiving a {\em CHK} flag will also only forward the flag to its CCW neighbor $z$ if and only if $z$ is contracted. If the {\em CHK} flag at a particle $q$ ever encounters an expanded CCW neighbor, the flag is held back until the neighbor contracts.  Additionally, the particle at position $v$ sends out a {\em CLR} flag to its CW neighbor as soon as it expands.
This flag is always forwarded in CW direction. If a {\em CLR} and  a {\em CHK} flag meet at some particle, the flags cancel each other out. If at some point in time, a particle $p$ at node $v$ receives a {\em CHK} flag from its CW neighbor in layer $1$, it implies that layer $1$ must be completely filled with contracted particles (and the complaint-based algorithm for layer $1$ has converged), and at that time this contracted particle $p$ elects itself the {\em leader particle}, setting the flag $p.leader$. Note that the leader election process itself does not incur any additional particle expansions or contractions on layer 1, only the complaint-based algorithm does.

\section{Analysis}
\label{sec:analysis}

In this section we show that our algorithm eventually solves the coating
problem, and we bound its worst-case work.

We say a particle $p'$ is the \emph{parent} of particle $p$ if $p'$
occupies the node in direction $p.parent$. Let an {\em active} particle be a particle in either
follower or root state. We call an active particle a {\em boundary particle} if it
has the object or at least one retired particle in its neighborhood, otherwise it is a \emph{non-boundary particle}.
A boundary particle is either a root or a follower, whereas non-boundary particles are always followers.
Note that throughout the analysis we ignore the modulo computation of layers done by the particles, i.e., layer $1$ is the unique layer of nodes with distance $1$ to the object.

Given a configuration $C$, we define a
directed graph $A(C)$ over all nodes in $\Geqt$ occupied by active particles
in $C$.  For every
expanded active particle in $C$, $A(C)$ contains a directed edge from the tail
to the head node of $p$. For every non-boundary particle $p$, $A(C)$ has a directed edge from the head of $p$ to
$p.parent$, if $p.parent$ is occupied by an active particle, and for every
boundary particle $p$, $p$ has a directed edge from its head to the
node in the direction of $p.dir$ as it would be calculated by
Algorithm~\ref{alg:boundaryDirectionAlgorithm}, if $p.dir$ is occupied by an
active particle. The \emph{ancestors} of a particle $p$ are all nodes
reachable by a path from the head of $p$ in $A(C)$. For each particle $p$ we
denote the ancestor that has no outgoing edge with
$p.superRoot$, if it exists.
Certainly, since every node has at most one outgoing edge in $A(C)$, the nodes
of $A(C)$ can only form a collection of disjoint trees or a ring of trees. We
define a \emph{ring of trees} to be a connected graph consisting of a single
directed cycle with trees rooted at it.

First, we prove several safety conditions, and then we prove various liveness
conditions that together will allow us to prove that our algorithm solves the
coating problem.

\subsection{Safety}

Suppose that we start with a valid instance $(P,O)$, i.e., all particles in
$P$ are initially contracted and idle and $V(P) \cup V(O)$ forms a single connected
component in $\Geqt$, among other properties. Then the following properties
hold, leading to the fact that $V(P) \cup V(O)$ stays connected at any time.

\begin{lemma}
  \label{lem:NoHole}
  At any time, the set of retired particles forms completely filled layers
  except for possibly the current topmost layer $\ell$,
  which is consecutively filled with retired particles in $CCW$ direction (resp. $CW$ direction)
  if $\ell$ is odd (resp. even).
\end{lemma}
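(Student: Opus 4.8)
The plan is to prove the statement as a global invariant maintained across the entire activation sequence, by induction on the number of particle activations. The invariant is precisely the conjunction of the two claims: every layer strictly below the current topmost retired layer $\ell$ is completely filled with retired particles, and layer $\ell$ itself consists of a single consecutive arc of retired particles, oriented $CCW$ if $\ell$ is odd and $CW$ if $\ell$ is even. The base case is immediate, since initially no particle is retired and the invariant holds vacuously.

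For the inductive step, I first observe that the invariant can only be disturbed by an activation that creates a \emph{new} retired particle: retired particles never move or change state (the \textbf{retired} branch of Algorithm~\ref{alg:spanningForestAlgorithm} performs no action), and expansions, contractions, and state changes of non-retired particles alter neither the set of retired nodes nor their arrangement. Hence it suffices to examine the single activation in which a particle $p$ executes Algorithm~\ref{alg:retiredCondition} and becomes retired, which happens through exactly one of three branches. In the \emph{First marker condition}, $p$ is the leader particle; here I invoke the leader election primitive of Section~\ref{subsec:leaderElection}, which guarantees that a leader emerges only once layer~$1$ is completely filled with contracted particles, so $p$ is the first retired particle, it lies on layer $\ell = 1$ (odd), and the invariant holds with a single-element, trivially consecutive $CCW$ arc. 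In the \emph{Extending Layer Markers} branch, $p$ is adjacent to a marker $q$ whose $q.marker$ port points to $p$ and which has locally verified that $q.CW$ and $q.CCW$ are both retired; combined with the inductive hypothesis that layer $q.layer$ was the topmost layer and was being filled consecutively, I will argue that this local check certifies layer $q.layer$ is now complete, so that $p$ becomes the seed retired particle of the new topmost layer $q.layer+1$, whose parity flips the fill direction as required and whose freshly set marker orientation seeds the new arc correctly.

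In the \emph{Retired Condition} branch, $p$ retires because the node at $p.dir$ is occupied by a retired particle; since $p.dir = p.CW$ for odd layers and $p.dir = p.CCW$ for even layers, $p$ is attaching to the existing retired arc from its open ($CCW$, resp. $CW$) side, so it extends the consecutive arc by exactly one node while preserving its orientation. The main obstacle is precisely this branch, where I must rule out that $p$ retires away from the unique open end of the arc, thereby creating a gap, branching the arc, or retiring on the wrong layer. For this I will use two geometric facts about $\Geqt$: that each layer forms a cycle (guaranteed by the no-holes and wide-tunnel properties of a valid instance), and that $p.down$, $p.CW$, and $p.CCW$, as computed by Algorithms~\ref{alg:boundaryDirectionAlgorithm} and~\ref{alg:clockwise}, depend only on the object and on the retired particles of layer $p.layer - 1$. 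Because the inductive hypothesis makes layer $\ell-1$ completely filled, $p.layer$ is correctly pinned to $\ell$, and along the cycle of layer $\ell$ the unique non-retired node whose $p.dir$-neighbor is retired is exactly the open end of the consecutive arc; hence $p$ must be that node. I will also treat the wrap-around case, in which retiring this last node completes layer $\ell$ and enables the Extending Layer Markers branch for layer $\ell+1$, so that the two branches dovetail and no gap is ever left behind.
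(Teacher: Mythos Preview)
Your proposal is correct and follows essentially the same approach as the paper: both argue that the first retired particle is the leader on layer~$1$, that subsequent retirements on layer~$\ell$ attach consecutively to the marker in the direction dictated by the parity of~$\ell$, and that the marker on layer~$\ell+1$ is created only once layer~$\ell$ is complete. The paper's proof is a brief narrative walking through this retirement order, whereas you recast it as an explicit induction on activations with a case split over the three branches of Algorithm~\ref{alg:retiredCondition}; your version is more rigorous (in particular, you spell out why a root on layer~$\ell+1$ cannot retire prematurely via the Retired Condition, and why the local check $q.CW$/$q.CCW$ retired certifies completeness of layer~$\ell$), but the underlying argument is the same.
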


\begin{proof}
  From our algorithm it follows that the first particle that retires is the leader particle, setting its marker flag in a direction not adjacent to a position in layer 1.
  The particles in layer 1 then retire starting from the leader in $CCW$ direction around the object.
  Once all particles in layer 1 are retired, the first particle to occupy the adjacent position to the leader via its marker flag direction will retire and become a marker particle on layer 2, extending its marker flag in the same direction as the flag set by the marker (leader) on layer 1.
  Starting from the marker particle in layer 2, other contracted boundary particles can retire in $CW$ direction along layer 2.
  Once all particles in layer 2 are retired,  the next layer will start forming.
  This process continues inductively, proving the lemma.
\end{proof}

The next lemma characterizes the structure of $A(C)$.

\begin{lemma}
  \label{lem:Trees}
  At any time, $A(C)$ is a forest or a ring of trees.
  Any node that is a 
	 super-root (i.e., the root
	of a tree in $A(C)$) or part of the cycle in the ring of trees
  is connected to the object or to a retired particle.
\end{lemma}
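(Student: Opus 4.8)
The plan is to exploit the fact that, by construction, every node of $A(C)$ has at most one outgoing edge. A contracted active particle contributes a single edge leaving the node it occupies (either to $p.parent$ or, for a boundary particle, to the would-be $p.dir$ computed by Algorithm~\ref{alg:boundaryDirectionAlgorithm}), while an expanded active particle contributes the edge from its tail to its head together with at most one edge leaving its head; thus each of the two occupied nodes still has out-degree at most one. A directed graph in which every vertex has out-degree at most one has the property that each weakly connected component is either acyclic — hence a tree rooted at its unique sink — or contains exactly one directed cycle with trees hanging off it. This is precisely the dichotomy ``forest or ring of trees,'' and it establishes the first sentence of the lemma (mirroring the remark made just before the statement).

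For the super-roots I would argue that a super-root, being a sink of $A(C)$ (a node with no outgoing edge), can only be a boundary particle, and then invoke the definition of a boundary particle. The key point is that a non-boundary particle always has out-degree at least one: it is a follower (non-boundary particles are always followers) whose edge goes to $p.parent$, and since states only progress $\text{idle}\to\text{active}\to\text{retired}$ and a non-boundary particle has no retired (and no object) neighbor, that parent is still an active particle, so the head-to-$p.parent$ edge is present; and if the particle is expanded, its tail carries the tail-to-head edge in any case. Hence every sink of $A(C)$ is a boundary particle, and by definition a boundary particle is adjacent to the object or to a retired particle, which is exactly the claim.

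For the nodes on the cycle of a ring of trees I would first show that no edge of the form $\text{head}\to p.parent$ can lie on a directed cycle. This uses the standing invariant that the $p.parent$ pointers always form a spanning forest rooted at the root particles — an invariant maintained by the spanning forest primitive (Algorithm~\ref{alg:spanningForestAlgorithm}) across handovers, as in~\cite{DNA,NANOCOM}; a directed cycle built only from parent edges would be a cycle in this forest, a contradiction. Consequently every directed cycle traverses at least one $p.dir$ edge and therefore passes through at least one boundary particle. The remaining and, I expect, hardest step is to upgrade this to the statement that \emph{every} node of the cycle is a boundary particle. Here I would combine a potential argument with Lemma~\ref{lem:NoHole}: assigning to each active particle its depth in the parent forest (roots at depth $0$), a parent edge strictly decreases this depth, so a cycle could contain a parent edge only if its $p.dir$ edges net-increase the depth. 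I would then use Lemma~\ref{lem:NoHole} to show that the cycle is forced to wind exactly once around the object along the innermost not-yet-completed layer, whose occupied nodes are all adjacent either to the object (layer $1$) or to the completely filled layer below it; this pins the $p.dir$ edges to boundary particles of a single layer and excludes parent edges from the cycle. Making this geometric ``the cycle must encircle the object and hence hug the boundary'' argument rigorous is the crux of the proof, and is where I would concentrate the effort; once it is established, every cycle node is a boundary particle and is therefore connected to the object or to a retired particle, which completes the argument.
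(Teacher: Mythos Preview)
Your treatment of the first sentence (out-degree at most one forces each component to be a tree or a single-cycle-with-trees) and of the super-roots (a non-boundary follower has an active parent, hence positive out-degree) is fine and essentially agrees with the paper.

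The gap is in the cycle case. You correctly isolate the goal --- every node on the cycle must be a boundary particle --- but your proposed route (a depth potential on the parent forest, then a geometric ``the cycle must encircle the object and hug the boundary'' argument via Lemma~\ref{lem:NoHole}) is both more intricate than needed and, by your own admission, not carried out. The simple observation you are missing is that the outgoing edge of a \emph{boundary} particle in $A(C)$ can never reach a non-boundary particle: the node at $p.dir$, as produced by Algorithm~\ref{alg:boundaryDirectionAlgorithm} and {\sc Clockwise}, is the port immediately past the last object/retired neighbor of $p$, and in $\Geqt$ consecutive neighbors of a node are themselves adjacent, so that target node is adjacent to the object or a retired particle. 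Hence any active particle sitting there is boundary. Combined with your own fact that the parent-edge (non-boundary) subgraph is acyclic, this one-way property immediately confines every directed cycle to the boundary subgraph --- no potential argument is required.

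This is exactly the decomposition the paper uses: it shows separately that (i) $A(C)$ restricted to non-boundary particles is a forest (induction over time on the parent pointers), and (ii) $A(C)$ restricted to boundary particles is a forest or a single ring, using Lemma~\ref{lem:NoHole} to argue that boundary particles occupy at most two consecutive layers with consistent $p.dir$ orientations. The ``boundary $\to$ boundary only'' edge property then glues the two pieces. Note also that your out-degree argument alone does not rule out \emph{several} rings in different components; the paper's layer analysis in step~(ii) gives uniqueness of the ring as a byproduct, which your plan does not address.
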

\begin{proof}
An active particle can either be a follower or a root. First, we
show the following claim.

\begin{claim}
At any time, $A(C)$ restricted to non-boundary particles forms a forest.
\end{claim}
\begin{proof}
 Let $A'(C)$  be the induced subgraph of $A(C)$ by the non-boundary particles only.
Certainly, at the very beginning, when all particles are still idle, the claim
is true. So suppose that the claim holds up to time $t$. We will show that it
then also holds at time $t+1$. Suppose that at time $t+1$ an idle particle
$p$ becomes active. If it is a non-boundary particle (i.e., a follower), it sets
$p.parent$ to a node occupied by a particle $q$ that is already active, so it
extends the tree of $q$ by a new leaf, thereby maintaining a tree. Edges can
only change if followers move. However, followers only move by a handover
or a contraction, thus a handover can only cause a 
 follower and its incoming edges
to disappear from $A'(C)$ (if that follower becomes a boundary particle), and an
 isolated contraction, can only cause a leaf and its outgoing edge to disappear from $A'(C)$, so a
tree is maintained in $A'(C)$ in each of these cases.
\end{proof}

Next we consider $A(C)$ restricted to boundary particles.

\begin{claim}
At any time, $A(C)$ restricted to boundary particles forms a
forest or a ring.
\end{claim}
\begin{proof}
The boundary particles always occupy the nodes adjacent to retired particles
or the object. Therefore, due to
Lemma~\ref{lem:NoHole}, the boundary particles either all lie in a single
layer or in two consecutive layers. Since the layer numbers uniquely specify
the movement direction of the particles, connected boundary
particles within a layer are all connected in the same orientation. Therefore,
if these particles all lie in a single layer, they can only form a directed
list or directed cycle in $A(C)$, proving the claim. If they lie in two
consecutive layers, say, $\ell$ and $\ell-1$, then $\ell-1$ must contain at
least one retired particle, so the nodes occupied by the boundary
particles in layer $\ell-1$ can only form a directed list. If there are at
least two boundary particles in layer $\ell-1$, this must also be
true for the nodes occupied by the boundary particles in layer
$\ell$ because according to Lemma~\ref{lem:NoHole} there must be at least two
consecutive nodes in layer $\ell-1$ not occupied by retired particles.
Moreover, it follows from the algorithm that $p.dir$ of a boundary particle
can only point to the same or the next lower layer of $p$, implying that in
this case $A(C)$ restricted to the nodes occupied by all boundary
particles forms a forest.
\end{proof}

Since a boundary particle $p$ never connects to a non-boundary particle the way
$p.dir$ is defined, and a follower without an outgoing edge in $A(C)$
restricted to the non-boundary particles must have an outgoing edge to a boundary
particle (otherwise it is a boundary particle itself), $A(C)$ is a
forest or a ring of trees. The second statement of the lemma follows from the
fact that every boundary particle must be connected to the object or
a retired particle.
\end{proof}

Finally, we investigate the structure formed by the idle particles.

\begin{lemma}
\label{idleToNonidle}
  At any time, every connected component of idle particles is connected to at least one non-idle particle or the object.
\end{lemma}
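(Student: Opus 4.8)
The plan is to prove the invariant by induction on the sequence of particle activations, exploiting two structural facts that make the dynamics of idle particles especially simple. First, idle particles never move, and no particle ever returns to the idle state once it has left it; consequently the graph $H_t$ induced on the nodes occupied by idle particles at time $t$ can only \emph{lose} vertices as $t$ grows, and the adjacencies among the surviving idle nodes never change. Second, by Algorithm~\ref{alg:spanningForestAlgorithm} a particle leaves the idle state only by becoming a follower or a root, and in either case it must already be adjacent to an active particle, a retired particle, or the object. For the base case at time $0$ every particle is idle, so the connected components of idle particles are exactly the connected components of $V(P)$; since $V(O)$ and $V(P)\cup V(O)$ are connected (the connectivity conditions of a valid instance), a component not adjacent to $V(O)$ could not be joined to the object through other particles (all of which are idle, hence would lie in the same component if adjacent), contradicting connectivity. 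Thus every idle component touches the object at time $0$.

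For the inductive step I assume the claim at time $t$ and consider the single particle $p$ activated at time $t+1$, splitting into two exhaustive cases according to whether this activation removes a vertex from the idle graph. In the first case $p$ transitions from idle to follower or root. Deleting the vertex $p$ from its idle component $K$ leaves a (possibly empty) collection of smaller components; by a standard graph argument each of them contains a vertex adjacent to $p$ in $H_t$, for otherwise it would not have become separated from the rest of $K$. Since $p$ is now a non-idle particle, each such sub-component is anchored to $p$, while every idle component other than $K$ is untouched and retains its anchor by the induction hypothesis.

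The second case—an activation that changes no idle particle's state, and hence performs at most one movement of a non-idle particle—is the main obstacle, because a movement alters adjacencies and could a priori strand an idle component. Here I use that the set of idle-occupied nodes is unchanged by such a move, so the partition of idle particles into components is identical before and after; it therefore suffices to show that no existing anchor is destroyed. I would argue this by classifying the single movement: an expansion only adds a newly occupied node while preserving all previously occupied ones, so any node anchoring an idle component stays occupied by the same, still non-idle, particle; a handover preserves the entire set of occupied nodes and involves only non-idle particles (a child carries the flag $p.parent$, so no idle particle ever takes part), whence any node adjacent to an idle component remains occupied by a non-idle particle; and an isolated contraction is guarded in Algorithm~\ref{alg:spanningForestAlgorithm} by the condition ``no idle neighbor,'' so it can never vacate a node adjacent to an idle particle. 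In every case each anchor present before the move survives it, and object-anchors are permanent because object nodes never move.

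Combining the base case with the two inductive cases closes the induction and proves the lemma. I expect the only delicate point to be verifying, move type by move type, that the guards and the handover semantics of Algorithm~\ref{alg:spanningForestAlgorithm} genuinely forbid vacating a node that anchors an idle component; the expansion and isolated-contraction cases are immediate from the added-node and ``no idle neighbor'' observations, and the handover case reduces to the single fact that a handover is occupied-node-set preserving and idle-particle-free.
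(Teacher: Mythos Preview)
Your proof is correct and follows essentially the same approach as the paper's: an induction on activations, with the base case coming from the connectivity of a valid instance and the inductive step arguing that no node anchoring an idle component is ever vacated. Your case analysis of movements (expansion, handover, isolated contraction with the ``no idle neighbor'' guard) is more explicit than the paper's, which compresses this into the single observation that an active particle adjacent to the component ``cannot contract outside of a handover with another follower or root particle,'' but the underlying argument is the same.
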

\begin{proof}
  Initially, the lemma holds by the definition of a valid instance.
  Suppose that the lemma holds at time $t$ and consider a connected component of idle particles.
  If one of the idle particles in the component is activated, it may either stay idle or change
  to an active particle, but in both cases the lemma holds at time $t+1$.
  If a retired particle that is connected to the component is activated, it does not move.
  If a follower or root particle that is connected to the component is activated,
  that particle cannot contract outside of a handover with another follower or root
  particle, which implies that no node occupied by it is given up by the
  active particles.
  So in any of these cases, the connected component of idle particle remains connected to a non-idle particle.
  Therefore, the lemma holds at time $t + 1$.
\end{proof}

The following corollary is consequence of the previous three lemmas.

\begin{corollary}
  At any time, $V(P) \cup V(O)$ forms a single connected component.
\end{corollary}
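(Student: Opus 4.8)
The plan is to prove the corollary by showing that the three preceding lemmas together account for \emph{all} the particles in the system and that each class of particles is connected to the object (either directly or through other particles). Recall that every particle is in exactly one of four states: idle, follower, root, or retired. The key observation is that followers and roots are precisely the \emph{active} particles, so the system partitions into idle particles, active particles, and retired particles, and I want to glue these three classes together into one connected component containing the object.

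First I would invoke Lemma~\ref{idleToNonidle} to handle the idle particles: every connected component of idle particles attaches to a non-idle particle or to the object. So it suffices to show that the subgraph induced by the non-idle particles (active plus retired) together with the object is connected. Next I would use Lemma~\ref{lem:Trees}: the active particles form a forest or a ring of trees in $A(C)$, and crucially every super-root (tree root) and every node on the cycle of a ring of trees is adjacent to the object or to a retired particle. This means each connected component of active particles hangs off either the object or a retired particle via its root(s). Finally, Lemma~\ref{lem:NoHole} tells me that the retired particles form completely filled layers (except possibly the topmost partial layer), and since layer~$1$ retired particles are adjacent to the object and layer~$\ell$ retired particles are adjacent to layer~$\ell-1$, the retired particles together with the object form a single connected region by induction on the layer number.

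Assembling these pieces: the object and all retired particles form one connected block (Lemma~\ref{lem:NoHole}); every active-particle tree is rooted at this block (Lemma~\ref{lem:Trees}), so active particles join the same component; and every idle component attaches to a non-idle particle or the object (Lemma~\ref{idleToNonidle}), so idle particles join as well. Since these three classes exhaust $V(P)$, the whole of $V(P) \cup V(O)$ is a single connected component. I would also note the inductive base case: at the start of a valid instance, $V(P)\cup V(O)$ is connected by assumption, and the lemmas certify that connectivity is preserved as particles move, since the only motions are handovers and contractions that never sever the tree/ring structure guaranteeing adjacency to the retired-object block.

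The main obstacle I anticipate is making the gluing argument fully rigorous rather than merely suggestive. Lemma~\ref{lem:Trees} guarantees that each super-root is \emph{adjacent} to the object or a retired particle, but I must be careful that this adjacency actually places the active component in the same connected component as the retired-object block, and that no active component floats free via a dangling boundary particle whose $p.dir$ happens to point to a non-active node. The statement of Lemma~\ref{lem:Trees} is phrased in terms of the auxiliary graph $A(C)$ (which only records active particles), so I need to translate the $A(C)$-level structural claim back into a statement about adjacency in $\Geqt$ within the full configuration; the phrase ``connected to the object or to a retired particle'' in the lemma is exactly the bridge, and I would lean on it explicitly. The remaining care is simply to confirm that idle, active, and retired are genuinely exhaustive and that Lemma~\ref{idleToNonidle}'s ``non-idle particle'' can be chained down to the retired-object block, which it can since a non-idle particle is either active (rooted at the block) or retired (in the block).
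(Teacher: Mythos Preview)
Your proposal is correct and matches the paper's intent exactly: the paper simply states that the corollary ``is a consequence of the previous three lemmas'' without spelling out the gluing argument, and you have filled in precisely that argument by partitioning particles into idle, active, and retired and using Lemmas~\ref{lem:NoHole}, \ref{lem:Trees}, and \ref{idleToNonidle} to attach each class to the object. Your discussion of the potential obstacle (translating the $A(C)$-level structure into $\Geqt$-adjacency) is more careful than anything the paper makes explicit, but it is the right concern and you resolve it correctly.
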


\begin{lemma}
\label{lem:complaintTokenEquality}
  At any time before the first particle retires,
  in every connected component $G$ of $A(C)$,
  the number of expanded boundary particles in $G$ plus the number of complaint flags in $G$
  is equal to the number of non-boundary particles in $G$.
\end{lemma}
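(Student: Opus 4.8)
The plan is to prove the identity by induction on the number of particle activations, showing that for \emph{every} connected component $G$ of $A(C)$ the quantity
\[
  \Phi(G) \;=\; E(G) + F(G) - N(G)
\]
stays equal to $0$ throughout any execution up to the first retirement, where $E(G)$, $F(G)$ and $N(G)$ denote the number of expanded boundary particles, of complaint flags, and of non-boundary particles in $G$, respectively. The conceptual picture I would make precise is a token-conservation argument: each non-boundary particle is accounted for by exactly one \emph{token}, namely either a complaint flag it still carries (a pending request) or an expanded boundary particle (a move in progress that will eventually pull a follower onto layer~$1$). Since only one particle acts per step and each action comes from a small repertoire, it suffices to check that every action preserves $\Phi(G)=0$ on each component; because $E$, $F$ and $N$ are additive over components, the merges and splits of $A(C)$ caused by movements are handled by summing, provided no single action leaves a component unbalanced.

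First I would dispose of the base case: initially all particles are idle, $A(C)$ is empty, and the claim holds vacuously. For the inductive step I would treat the ``balanced-by-inspection'' actions. When an idle particle turns into a follower it simultaneously creates one non-boundary particle and one complaint flag (Algorithm~\ref{alg:spanningForestAlgorithm}), so the two $+1$'s sit on opposite sides and $\Phi$ is unchanged; its single out-edge attaches it to one component (or merges two), which is harmless by additivity. When an idle particle becomes a root, or a contracted boundary follower becomes a root, the particle is contracted and stays on layer~$1$, contributing $0$ to all three counts, so only a possible merge occurs and $\Phi$ is preserved. When a super-root on layer~$1$ consumes a complaint flag and expands (Algorithm~\ref{alg:boundaryDirectionAlgorithm}), $F$ drops by one and $E$ rises by one, again leaving $\Phi$ fixed. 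A \textsc{ForwardComplaint} step (Algorithm~\ref{alg:complaint}) moves a flag along an out-edge of $A(C)$, hence strictly inside one component, and changes no count.

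The crux, and the step I expect to be the main obstacle, is the handover (Algorithm~\ref{alg:handover}) together with the isolated contractions it is interleaved with. Here I would exploit that a handover leaves the occupied node set unchanged and merely transfers the expanded state from a particle $p$ to its contracted child $q$, so the whole analysis reduces to how the boundary indicators of $p$ and $q$ change. The delicate sub-case is the handover that drags a follower from layer~$2$ onto layer~$1$: at that instant the expanded boundary particle sustaining the token contracts while the dragged follower ceases to be a non-boundary particle, and the heart of the proof is to show that the drop in $E$ matches exactly the drop in $N$. I would establish this by a local geometric argument (using that, before any retirement, boundary particles are precisely those adjacent to the object, i.e.\ those on layer~$1$) so that $q$ indeed lands on layer~$1$ while $p$ remains on it; the remaining handovers, where both particles stay off layer~$1$ or both stay on it, follow from the same occupied-set bookkeeping. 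Finally, isolated contractions free a tail node that carries no token, and the component splits they may trigger are controlled by the forest/ring-of-trees structure of Lemma~\ref{lem:Trees}: a split only separates already-matched token/non-boundary pairs, so each resulting component keeps $\Phi=0$. Pinning down that no flag or expansion ever crosses a component boundary, and that the surface-reaching handover balances on the nose, is the technical core of the argument.
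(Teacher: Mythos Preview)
Your proposal follows essentially the same approach as the paper: an induction over particle activations with a case analysis showing that each action type (idle-to-follower creating a flag, complaint forwarding, a super-root consuming a flag to expand, and the layer-1 handover that pulls a follower onto the surface) preserves the balance $E+F=N$ on every component, with component merges handled by additivity. Your treatment is in fact slightly more exhaustive than the paper's---you explicitly discuss isolated contractions and potential component splits via Lemma~\ref{lem:Trees}, which the paper leaves implicit---but the overall structure, the key observations, and in particular the identification of the surface-reaching handover as the decisive case all coincide with the paper's proof.
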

\begin{proof}
  Initially, the lemma holds trivially.
  Suppose the lemma holds at time $t$ and consider the next activation of a particle.
  We only discuss relevant cases.
  If an idle particle becomes a non-boundary particle (i.e., it is not connected to the object but joins a connected component), it also generates a complaint flag.
  So both the number of non-boundary particles
  and the number of complaint flags increases by one for the component the particle joins.
  If a non-boundary particle expands as part of a handover with a boundary particle,
  both the number of expanded boundary particles and the number of non-boundary particles
  decrease by one for the component.
  If a boundary particle expands as part of a handover, that handover must be with another boundary particle,
  so the number of expanded boundary particles remains unchanged for that component.
  Since by our assumption there is no retired particle, all boundary particles are in layer 1.
  Hence, for a boundary particle to expand outside of a handover, it has to consume a complaint flag.
  This increases the number of expanded boundary particles by one and decreases the number of complaint flags by one.
  Finally, an expansion of a boundary particle outside of a handover can connect two components of $A(C)$.
  Since the equation given in the lemma holds for each of these components individually,
  it also holds for the newly built component.
\end{proof}

\subsection{Liveness}

We say that the particle system \emph{makes progress} if (i) an idle particle
becomes active, or (ii) a movement (i.e., an expansion, handover, or
contraction) is executed, or (iii) an active particle retires. In the
following, we always assume that we have a fair activation sequence for the
particles.

Before we show under which circumstances our particle system eventually makes
progress, we first establish some lemmas on how particles behave during the
execution of our algorithm.

\begin{lemma}
\label{lem:idleToActive}
Eventually, every idle particle becomes active.
\end{lemma}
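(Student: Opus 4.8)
The plan is to prove this by establishing that idle particles cannot remain idle forever, arguing by contradiction combined with the connectivity structure we already have. Suppose for contradiction that some idle particle never becomes active. By Lemma~\ref{idleToNonidle}, every connected component of idle particles is at all times connected to at least one non-idle particle or to the object $O$. The key observation is that an idle particle becomes active (a root or a follower) precisely when it is adjacent to the object, to a retired particle, or to a root/follower particle. So if an idle particle $p$ is adjacent to any non-idle particle or to the object, then upon its next activation --- which is guaranteed by fairness --- it must leave the idle state.

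First I would make this adjacency observation precise: examine the \textbf{idle} case of Algorithm~\ref{alg:spanningForestAlgorithm} and note that if $p$ is connected to $O$, to a retired particle, or to a root or follower, then $p$ becomes active on its next activation. Hence the only way an idle particle can survive activation is to have all of its neighbors either idle or unoccupied. Combining this with fairness, an idle particle adjacent to a non-idle particle or the object becomes active the next time it is activated.

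Next I would propagate this through a connected component of idle particles. Consider a maximal connected component $K$ of idle particles that, by assumption, never activates out of idleness, and take it at some fixed time; by Lemma~\ref{idleToNonidle}, $K$ touches a non-idle particle or the object. Let $p \in K$ be an idle particle adjacent to that non-idle particle or to $O$ (such a $p$ exists since $K$ is connected to it). By the observation above, $p$ becomes active upon its next activation, contradicting the assumption that every particle in $K$ stays idle forever. One subtlety is that the component $K$ and the set of non-idle particles may change over time as particles move; to handle this I would argue at the level of a single idle particle $p$ assumed to stay idle forever, track the (possibly time-varying) component containing $p$, and use Lemma~\ref{idleToNonidle} to guarantee that $p$'s component always touches a non-idle particle or the object. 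Since $p$ stays idle, its occupied node is never vacated, so the "boundary" idle particle of this component adjacent to a non-idle neighbor eventually has its turn; a clean induction on the distance within the idle component from $p$ to the nearest non-idle particle or object shows that idleness is resolved from the outside in.

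The main obstacle I anticipate is the asynchronous, dynamic nature of the system: non-idle particles move, so the particular neighbor witnessing $K$'s connection to the active structure may change between activations, and one must be careful that the contradiction does not rely on a static snapshot. I would address this by phrasing the argument so that it only needs the \emph{invariant} from Lemma~\ref{idleToNonidle} (which holds at every time) together with fairness, rather than any persistence of a specific neighbor; the induction is on the hop-distance, inside the idle component, to the frontier with non-idle particles or $O$, and at each inductive step fairness forces the frontier idle particle to activate and become non-idle, thereby shrinking the idle component and advancing the frontier inward toward $p$.
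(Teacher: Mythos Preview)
Your proposal is correct and takes essentially the same approach as the paper: both invoke Lemma~\ref{idleToNonidle} to find, whenever idle particles remain, some idle particle adjacent to a non-idle particle or the object, then use Algorithm~\ref{alg:spanningForestAlgorithm} together with fairness to conclude that this particle becomes active, and iterate. The paper's proof is much terser (three sentences) and does not spell out the asynchrony concern you raise; your explicit observation that only the time-invariant of Lemma~\ref{idleToNonidle} is needed, rather than persistence of any particular neighbor, is a reasonable elaboration but not a different argument.
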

\begin{proof}
As long as an idle particle exists, there is always an idle particle $p$ that
is connected to a non-idle particle or the object according to
Lemma~\ref{idleToNonidle}. The next time $p$ is activated $p$ becomes active
according to Algorithm~\ref{alg:spanningForestAlgorithm}. Therefore,
eventually all particles become active.
\end{proof}

The following statement shows that even though
 super-roots
can be followers, they
will become a boundary particle the next time they are activated.

\begin{lemma}
\label{lem:followerRootToBoundary} In every tree of $A(C)$,
 every boundary particle in
 the follower state
enters a root state the next time it is activated. In particular, every super-root in $A(C)$ will enter the root state the next time it is activated.
\end{lemma}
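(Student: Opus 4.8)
The plan is to read the conclusion directly off the follower branch of the spanning forest primitive (Algorithm~\ref{alg:spanningForestAlgorithm}) and to reduce the super-root clause to the first statement via Lemma~\ref{lem:Trees}. I would fix a boundary particle $p$ in the follower state; by definition it has the object or at least one retired particle in its neighborhood. If $p$ is contracted, the first line of the follower branch applies verbatim: a contracted follower adjacent to $O$ or to a retired particle becomes a root the next time it is activated. This settles the contracted case in essentially one line, and it is the part I expect to be routine.

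For the ``in particular'' clause I would invoke Lemma~\ref{lem:Trees}: a super-root is the root of a tree in $A(C)$ and is therefore connected to the object or to a retired particle, so it is a boundary particle. If it is already in the root state there is nothing to prove, and if it is a follower then it is a boundary follower, so the first statement applies. Hence the super-root clause is a corollary of the first statement together with Lemma~\ref{lem:Trees}.

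The main obstacle is an expanded boundary follower, since for such a particle the follower branch prescribes a handover or a contraction rather than an immediate transition to root, so the one-activation claim is not literally read off the algorithm. The way I would try to close this is to show that the super-roots that matter are contracted, i.e.\ that the head of an expanded follower is an interior node of its tree in $A(C)$ rather than the root: a follower becomes expanded only by being pulled by its parent along the direction of motion, so its head occupies the node the parent just vacated and its recomputed $p.dir$ (Algorithm~\ref{alg:boundaryDirectionAlgorithm}) points to the node the parent advanced into, which keeps an outgoing edge at the head as long as that node is held by an active particle. I expect the genuinely hard configurations to be those right at the retirement frontier, where the particle ahead may itself have retired on the same layer so that the head loses its outgoing edge while $p$ is still an expanded follower; handling these cleanly --- using Lemma~\ref{lem:NoHole} to constrain the shape of the retired region --- is the step I would anticipate needing the most care, and possibly a slight strengthening of the statement or an appeal to two consecutive activations.
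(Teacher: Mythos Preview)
Your approach matches the paper's, and in fact you are being more careful than the paper is. The paper's entire proof is two sentences: a boundary follower by definition has a retired particle or the object in its neighborhood, hence by Algorithm~\ref{alg:spanningForestAlgorithm} it becomes a root once activated. The paper does not separately derive the super-root clause from Lemma~\ref{lem:Trees} (as you do), and it does not distinguish the contracted and expanded cases at all.

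The obstacle you flag --- that the follower branch of Algorithm~\ref{alg:spanningForestAlgorithm} literally only promotes a \emph{contracted} follower adjacent to $O$ or to a retired particle --- is a genuine reading of the pseudocode, and the paper's proof simply does not engage with it. So the extra work you propose (arguing that the relevant super-roots are contracted, or appealing to an additional activation) is not something the paper supplies; the paper treats the conclusion as immediate from the algorithm description. Your contracted case and your reduction of the super-root clause via Lemma~\ref{lem:Trees} already subsume the paper's argument; the expanded-follower worry you raise is a subtlety the paper glosses over rather than resolves.
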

\begin{proof}
Let $p$ be a follower 
 boundary particle. By definition 
 $p$ must have a retired particle or the object in its neighborhood. Therefore, $p$ immediately
becomes a root particle once it is activated according to
Algorithm~\ref{alg:spanningForestAlgorithm}.
\end{proof}

Furthermore, the following lemma provides a relation between the movement of
 super-roots and the availability of complaint flags.

\begin{lemma}
\label{lem:ComplaintLeadsToMovement} For every tree of $A(C)$ with a
contracted
 super-root $p$ and at least one complaint flag, $p$ will eventually
retire or expand to $p.dir$, thereby consuming a complaint flag, and after the
expansion $p$ may cease to be a 
 super-root.
\end{lemma}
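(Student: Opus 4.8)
The plan is to show that a contracted super-root $p$ has, while it remains a super-root, exactly two ways to act --- retiring or expanding along $p.dir$ --- and that a complaint flag in its tree forces one of them within finitely many activations. I would start with a reduction: if $p$ is currently a follower, Lemma~\ref{lem:followerRootToBoundary} guarantees that its next activation turns it into a root without any movement and without destroying a flag, so I may assume $p$ is a contracted root. A contracted root only runs {\sc MarkerRetiredConditions}$(p)$ and then, if it has not retired, {\sc LayerExtension}$(p)$, and neither routine lets a \emph{contracted} particle take part in a handover; hence no other particle can displace $p$, and $p$ stays a contracted super-root until it itself retires or expands.

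Because $p$ is a super-root it has no outgoing edge in $A(C)$, so the node $u$ at direction $p.dir$ is either occupied by a retired particle or unoccupied. In the first case the Retired Condition inside {\sc MarkerRetiredConditions}$(p)$ fires at the next activation and $p$ retires, which by fairness happens in finite time; this is the ``retire'' alternative. So assume $u$ is unoccupied. I claim that, while $p$ stays a super-root, $u$ remains $p$'s designated expansion target and is reachable by no particle other than $p$: the movement orientation along a layer is uniform, and $p.dir$ is computed so that the node CW (resp.\ CCW) of $p$ is entered only by $p$'s own expansion; I would justify this from the layer structure provided by Lemmas~\ref{lem:NoHole} and~\ref{lem:Trees}. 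Granting this, $p$ remains a contracted super-root with $u$ unoccupied until it expands into $u$.

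It then remains to show that $p$ eventually holds a complaint flag, for once it does {\sc LayerExtension}$(p)$ expands $p$ into $u$ and consumes that flag. The engine is the forwarding rule of {\sc ForwardComplaint}: a follower passes its flag to $p.parent$ and a root to $p.dir$, i.e.\ along the edges of $A(C)$ toward the super-root, and only into a predecessor not already holding a flag; hence no flag ever moves away from $p$, and by Lemma~\ref{lem:complaintTokenEquality} flags are conserved except when a super-root consumes one. To turn ``travels toward $p$'' into ``reaches $p$'' I would track the potential $d_{\min}$, the least tree-distance from $p$ to a flag-bearing particle. Whenever $d_{\min}>0$, the nearest flag-bearer $q$ has a flag-free predecessor, so activating $q$ forwards its flag one step and strictly decreases $d_{\min}$; the tree being finite and the schedule fair, $d_{\min}$ reaches $0$ in finite time, i.e.\ $p$ comes to hold a flag. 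That $p$ ``may cease to be a super-root'' after expanding is immediate, since its new outgoing direction from $u$ may lead to an active particle.

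The main obstacle is making this rigorous under concurrency, since the tree of $A(C)$ is reshaped while we wait --- by handovers and by the follower-to-root conversions of Lemma~\ref{lem:followerRootToBoundary}. Specifically, I would need to verify the claim of the second paragraph that no other particle can seize $u$, that each particle's forwarding target really coincides with its outgoing $A(C)$-edge (immediate for roots, and true for boundary followers after one activation), that $d_{\min}$ stays well defined and non-increasing when a handover relocates a flag-bearer, and --- most delicately --- that no retiring particle elsewhere clears the particular flag we are tracking before it reaches $p$. Once these bookkeeping invariants are established, the statement follows from the case analysis above together with fairness.
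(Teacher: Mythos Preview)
Your proposal follows essentially the same route as the paper: reduce to $p$ being a root via Lemma~\ref{lem:followerRootToBoundary}, split on whether the node at $p.dir$ is occupied by a retired particle (retire) or unoccupied (expand once a flag arrives), and argue that complaint flags are forwarded along $A(C)$ toward the super-root until $p$ holds one. The paper's own proof is a single short paragraph that simply asserts ``complaint flags are forwarded along the tree of $p$ towards $p$; once the flag reaches $p$, it will expand,'' without the $d_{\min}$ potential or any of the concurrency bookkeeping you flag in your final paragraph. So your treatment is strictly more careful than the paper's; the worries you list about the tree reshaping under handovers, about other particles seizing $u$, and about a retiring particle clearing a flag en route are not addressed in the paper either---it treats the forwarding as self-evident from Algorithm~\ref{alg:complaint} and leaves it at that.
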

\begin{proof}
If $p$ is not a root, it becomes one the next time it is activated according to Lemma~\ref{lem:followerRootToBoundary}.
Therefore, assume $p$ is a root.
If there is a retired particle in $p.dir$, $p$ retires
and ceases to be a 
 super-root. If the node in $p.dir$ is unoccupied, $p$ can
potentially expand. According to Algorithm~\ref{alg:complaint}, complaint
flags are forwarded along the tree of $p$ towards $p$. Once the flag reaches
$p$, it will expand, thereby consuming the flag. If $p$ expands, it might
have an active particle in its movement direction and thus ceases to be a
 super-root.
\end{proof}

Next, we prove the statement that expanded particles will not starve, i.e.,
they will eventually contract.

\begin{lemma}
\label{lem:particlesContract}
Eventually, every expanded particle contracts.
\end{lemma}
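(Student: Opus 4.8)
The plan is to reduce to a regime with no idle particles and then induct along the tree structure of $A(C)$, exploiting the fact that the expanded state flows strictly downward (from a particle toward its children) under handovers and is absorbed at the leaves.

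First I would dispose of idle neighbors. Since a particle never returns to the idle state once it has left it, the set of idle particles is monotonically non-increasing, and by Lemma~\ref{lem:idleToActive} it eventually becomes empty; hence there is a finite time $T_0$ after which no idle particle exists anywhere. After $T_0$ the isolated-contraction rule (case $(ii)$ in the follower/root descriptions of Algorithm~\ref{alg:spanningForestAlgorithm}) simplifies: an activated expanded particle with no children contracts immediately, since the ``no idle neighbor'' condition then holds automatically.

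Next I would set up the main induction on the height $h(p)$ of an expanded particle $p$ in its tree of $A(C)$ (Lemma~\ref{lem:Trees} guarantees $A(C)$ is a forest or a ring of trees), where $h(p)$ is the length of the longest chain of descendants below $p$. For the base case $h(p)=0$, $p$ is a leaf, has no children, and contracts at its next activation after $T_0$. For the inductive step, note that when $p$ is activated it contracts unless it has at least one child and all its children are expanded; so it suffices to show that $p$ is eventually activated while some child is contracted. Each child $q$ has $h(q)<h(p)$, so by the induction hypothesis $q$ eventually contracts. The crucial structural observation is that a contracted child cannot re-expand on its own while it remains a child of $p$: a contracted follower performs no movement when activated (it may only turn into a root, which removes it from $p$'s children in $A(C)$), and a root child $q$ has $q.dir$ pointing at $p$, so the occupancy of that node by $p$ blocks the layer-extension expansion of Algorithm~\ref{alg:boundaryDirectionAlgorithm}. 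Consequently, once a child contracts it stays contracted until $p$ pulls it in a handover, which is exactly the event in which $p$ contracts; since the activation sequence is fair, $p$ is activated infinitely often and will eventually be activated while holding a contracted child, at which point it performs {\sc Handover}$(p)$ (Algorithm~\ref{alg:handover}) and contracts.

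The hard part will be making the height argument robust to the fact that $A(C)$ evolves over time: children can be reassigned or can leave $p$'s subtree when they turn into roots, and boundary--boundary handovers must be accounted for. I expect to handle this by recasting the induction as a ``lowest offender'' argument: assume for contradiction that the set of particles that are expanded after $T_0$ and never contract again is nonempty, pick a member $p$ whose subtree in $A(C)$ contains no other such particle, and then derive a contradiction exactly as above, using that every expanded descendant of $p$ does contract and that a contracted child remains contracted until pulled. The remaining care is to verify that no process after $T_0$ can add a genuinely new, perpetually expanded descendant below $p$, which follows since new children can only arise from activating idle particles (none remain) and re-expansion of an existing child requires a handover initiated by $p$ itself.
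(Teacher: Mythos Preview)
Your argument is correct and follows essentially the same route as the paper: first wait until no idle particles remain (via Lemma~\ref{lem:idleToActive}), then use the tree structure of $A(C)$ from Lemma~\ref{lem:Trees} to show that contractions propagate from the leaves up to $p$. The paper packages this as a path argument---fix one descendant path from $p$ to a leaf, locate the nearest contracted particle on it, and push the contraction back up via successive handovers---whereas you phrase it as an induction on height (with a lowest-offender reformulation); your explicit observation that a contracted child cannot re-expand while it remains a child of $p$ is exactly what makes the paper's backward-propagation step go through, even though the paper does not isolate it.
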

\begin{proof}
Consider an expanded particle $p$ in a configuration $C$. By
Lemma~\ref{lem:idleToActive} we can assume w.l.o.g. that all particles in $C$
are active or retired. If there is no particle $q$ with either $q.parent=p$ or
$p$ occupying the node in $q.dir$, then $p$
can contract once it is activated.
If such a $q$ exists and it is contracted, $p$ contracts in a handover (see Algorithm~\ref{alg:handover}).
If $q$ exists and is expanded, we consider the tree
of $A(C)$ that $p$ is part of. Consider a subpath in this tree that starts in
$p$, i.e., $(v_1,v_2,\ldots,v_k)$ such that $v_1,v_2$ are occupied by $p$ and
$v_k$ is a node that does not have an incoming edge in $A(C)$. Let $v_i$ be
the first node of this path that is occupied by a contracted particle. If all
particles are expanded, then clearly the last particle occupying $v_{k-1},v_k$
eventually contracts and we can set $v_i$ to $v_{k-1}$. Since $v_i$ is
contracted it eventually performs a handover with the particle occupying
$v_{i-2},v_{i-1}$. Now we can move backwards along $(v_1,v_2,\ldots,v_{i-1})$
and it is guaranteed that a contracted particle eventually performs a handover
with the expanded particle occupying the two nodes before it on the path. So
eventually $q$ is contracted, eventually performs a handover with $p$ and the
statement holds.
\end{proof}

In the following two lemmas we will specifically consider the case that $B \le
n$, i.e., the particles can coat at least one layer around the object.

\begin{lemma}
\label{lem:contractedLayer1}
If $B \le n$, layer $1$ is completely filled with contracted particles eventually.
\end{lemma}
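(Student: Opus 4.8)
The plan is to argue by contradiction: assume layer~$1$ is never completely filled with contracted particles, and derive a contradiction from the complaint mechanism together with the counting bound $B \le n$. First I would make two reductions. Since the first particle to retire is the leader particle (Lemma~\ref{lem:NoHole}), and the leader is only elected after layer~$1$ has been verified to be full of contracted particles (Section~\ref{subsec:leaderElection}), under our assumption \emph{no} particle ever retires; and by Lemma~\ref{lem:idleToActive} there is a finite time after which no idle particle remains. So from some time on every particle is a (non-retired) follower or root. Next I would track the number $m(t)$ of occupied nodes of layer~$1$. The complaint-based mechanism guarantees that once a layer~$1$ node is occupied it is never vacated (as noted in Section~\ref{subsec:leaderElection}, using that particles contract out of their tails); hence $m(t)$ is non-decreasing and bounded by $B$, so it stabilizes at some value $m^\ast \le B$ after a finite time $T$, and in fact the \emph{same} set of $m^\ast$ layer~$1$ nodes is occupied at all times $t \ge T$.

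The core of the proof is to show $m^\ast = B$. The key observation is that, by Algorithm~\ref{alg:boundaryDirectionAlgorithm}, a particle on layer~$1$ can expand only into an unoccupied node in direction $p.dir = p.CW$, which for a layer~$1$ particle is again a layer~$1$ node (its $p.down$ points at the object on layer~$0$, so $p.CW$ never points to layer~$0$), and only while holding a complaint flag, which it then consumes. Thus \emph{every} expansion on layer~$1$ fills a previously empty layer~$1$ node and consumes one complaint flag. After $T$ the occupancy $m(t)$ cannot increase, so no layer~$1$ particle ever expands after $T$; consequently no complaint flag is ever consumed after $T$, and since no new followers (hence no new flags) are created once the idle phase is over, the total number of complaint flags is constant after $T$. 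If this constant were positive, a flag would be forwarded (Algorithm~\ref{alg:complaint}) up its tree to the super-root, which by Lemmas~\ref{lem:Trees} and~\ref{lem:followerRootToBoundary} is a layer~$1$ root with an empty $p.dir$ and which, by Lemma~\ref{lem:particlesContract}, is eventually contracted; Lemma~\ref{lem:ComplaintLeadsToMovement} would then force it to expand into $p.dir$ (it cannot retire), contradicting that no expansion occurs after $T$. Hence there are no complaint flags after $T$. Moreover, by Lemma~\ref{lem:particlesContract} and the absence of new layer~$1$ expansions, eventually no expanded boundary particle remains. Applying Lemma~\ref{lem:complaintTokenEquality} (valid because no particle has retired), in every component the number of expanded boundary particles plus the number of flags equals the number of non-boundary particles; both terms on the left being $0$, there are eventually no followers at all.

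This contradicts a simple count: if $m^\ast < B \le n$, then at every time $t \ge T$ at most $m^\ast$ particles lie on layer~$1$, so at least $n - m^\ast \ge 1$ particles occupy no layer~$1$ node, and such a particle is non-boundary (with no retired particles around, adjacency to the object would put it on layer~$1$), i.e.\ a follower --- contradicting the conclusion above. Therefore $m^\ast = B$: layer~$1$ is eventually fully occupied. Once it is, it forms a full cycle with no empty $p.dir$, so no further expansion is possible; by Lemma~\ref{lem:particlesContract} every remaining expanded particle contracts (without vacating any layer~$1$ node), and layer~$1$ ends up completely filled with contracted particles. I expect the main obstacle to be this bookkeeping step: pinning down that after the occupancy stabilizes all layer~$1$ expansions cease, so that complaint flags can be neither consumed nor regenerated, and then combining this with Lemma~\ref{lem:complaintTokenEquality} to force the follower count to zero in contradiction with the counting bound --- rather than trying to trace the motion of individual flags or particles.
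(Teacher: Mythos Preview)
Your approach differs genuinely from the paper's. The paper argues \emph{directly}: as long as layer~$1$ is not filled with contracted particles, it exhibits a specific non-boundary particle $q$ whose parent lies on the boundary chain between $q$ and the super-root, and traces a handover wave that eventually pulls $q$ onto layer~$1$, strictly increasing the number of particles on layer~$1$. You instead stabilise the occupied-node count $m(t)$ at $m^\ast$, kill off all complaint flags via Lemma~\ref{lem:ComplaintLeadsToMovement}, and then invoke Lemma~\ref{lem:complaintTokenEquality} to force the non-boundary count to zero, contradicting $m^\ast < B \le n$. This is an elegant global counting argument and avoids the paper's explicit chase along the boundary chain.

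There is, however, a gap at the step ``by Lemma~\ref{lem:particlesContract} and the absence of new layer~$1$ expansions, eventually no expanded boundary particle remains'' (and the analogous claim in your final paragraph). After $T$ it is true that no layer~$1$ particle expands \emph{into an unoccupied node via Algorithm~\ref{alg:boundaryDirectionAlgorithm}}, but layer~$1$ particles certainly still expand via handovers, so your blanket statement ``no layer~$1$ particle ever expands after $T$'' is false as written. More importantly, Lemma~\ref{lem:particlesContract} only guarantees that each individual expanded particle eventually contracts; it does not by itself give that the \emph{set} of expanded boundary particles empties out, because each such contraction (via handover) creates a new expanded particle one step back in the tree. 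To close the argument you need a monotone quantity: for instance, the number of non-boundary particles is non-increasing once the idle phase is over, and each time a handover wave reaches a non-boundary child that count strictly drops --- but establishing that such a handover must occur is precisely the content of the paper's direct argument. So your contradiction framework is sound, but the step that discharges the expanded-boundary count still needs the paper's progress mechanism (or an equivalent monotonicity lemma) to be complete.
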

\begin{proof}
Consider a configuration $C$ such that layer $1$ is not completely filled by
contracted particles. Note that in this case the leader election cannot have
succeeded yet, which means that a leader cannot be elected, and therefore
no particle can be retired in configuration $C$. So by
Lemma~\ref{lem:idleToActive} we can assume w.l.o.g. that all particles in
configuration $C$ are active.

Since layer 1 is not completely filled by contracted particles, there is either
at least one unoccupied node $v$ on layer $1$ or all nodes are occupied,
but there is at least one expanded particle on layer $1$. We show that in both
cases a follower will move to layer $1$, thereby filling up the layer until
all particles are contracted. In the first case, let $p$ be the
 super-root of a tree
in $A(C)$ that still has non-boundary particles, let $(p_0=p,p_1,\ldots,p_k)$ be
the boundary particles of the tree such that $p_{i-1}$ occupies the
node in $p_i.dir$ and let $q$ be the  non-boundary particle in the tree that is adjacent to
some $p_j in (p_0,\ldots,p_k)$ such that $j$ is minimal. If a particle $p_i$ in
$(p_0,\ldots,p_j=q.parent)$ is expanded, it eventually contracts
(Lemma~\ref{lem:particlesContract}) by a handover with $p_{i+1}$, and by
consecutive handovers all particles in $(p_{i+1},\ldots,p_j)$ eventually
expand and contract until the particle $p_j=q.parent$ expands. According to
Algorithm~\ref{alg:handover}, $p_j$ performs a handover with $q$. Therefore,
the number of particles on layer $1$ has increased. If all particles in
$(p_0,\ldots,q.parent)$ are contracted, then by
Lemma~\ref{lem:complaintTokenEquality} a complaint flag still exists in the
tree. Eventually, $p$ expands by Lemma~\ref{lem:ComplaintLeadsToMovement}.
Consequently, we are back in the former case that a particle in
$(p_0,\ldots,q.parent)$ is expanded.

In the second case, let $p'$ be an expanded boundary particle and let $q'$ be the
non-boundary particle with the shortest path in $A(C)$ to $p'$.
By a similar argument as for the first case, particles on layer $1$ perform
handovers (starting with $p'$) until eventually the node in $q'.parent$ is
occupied by a tail. Again, $q'$ eventually performs a handover and the number
of particles on layer $1$ has increased.
\end{proof}

As a direct consequence, we can show the following.

\begin{lemma}
\label{lem:leaderOnLayer1}
If $B \le n$, a leader is elected in layer $1$ eventually.
\end{lemma}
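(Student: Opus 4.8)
The plan is to obtain this as a direct consequence of Lemma~\ref{lem:contractedLayer1} together with the correctness of the node-based leader election process of Section~\ref{subsec:leaderElection}. First I would invoke Lemma~\ref{lem:contractedLayer1}: since $B \le n$, there is a time $t_0$ after which layer~$1$ is completely filled with contracted particles. By the termination property of the complaint-based coating primitive, the particles on layer~$1$ then move no further, and by the complaint-based coating invariant a node on layer~$1$ that is once occupied is never vacated again. Hence from $t_0$ on, both the set of occupied layer-$1$ nodes and the fact that each is held by a contracted particle are stable.

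Next I would argue that the leader election, which is abstracted to run on the \emph{nodes} of layer~$1$, completes. Since layer~$1$ forms a cycle on $\Geqt$, the configuration from $t_0$ onward is exactly the static-cycle setting for which the leader election algorithm of~\cite{DNA} was designed and proven to elect a unique leader node. The modifications listed in Section~\ref{subsec:leaderElection} only prescribe how the flags and tokens are stored and forwarded while particles move; because an occupied layer-$1$ node is never left empty, and because an expanded particle that contracts out of a node transfers all of that node's flags to the particle expanding into it, no leader election state is ever lost. Thus the process makes progress throughout, and after $t_0$ it behaves identically to the static case and terminates with a single leader node $v$.

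Finally I would handle the passage from a leader \emph{node} to a leader \emph{particle}. Once $v$ is elected a leader node, the contracted particle occupying $v$ issues a \emph{CHK} flag that is forwarded in CCW direction only across contracted neighbors. Because by Lemma~\ref{lem:contractedLayer1} every node of layer~$1$ is occupied by a contracted particle after $t_0$, the \emph{CHK} flag is never held back; it traverses the entire cycle and returns to $v$ from its CW neighbor, at which moment the particle at $v$ sets its $leader$ flag and becomes the leader particle. (The \emph{CLR} mechanism only cancels a \emph{CHK} flag when some layer-$1$ node is not yet full and occupied by a contracted particle, which no longer occurs after $t_0$.) This establishes that a leader is elected on layer~$1$.

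I expect the main obstacle to be the second step: arguing rigorously that the leader election makes uninterrupted progress \emph{during} the coating phase, while particles are still expanding, contracting, and handing over on layer~$1$, and not merely after the configuration stabilizes. The crux is the invariant that an occupied layer-$1$ node is never left empty, which guarantees via the handover-transfer rule of Section~\ref{subsec:leaderElection} that the per-node leader-election state is always carried forward and never destroyed; this is precisely what lets us reduce the situation after $t_0$ to the static-cycle correctness result of~\cite{DNA}.
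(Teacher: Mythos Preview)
Your proposal is correct and follows essentially the same approach as the paper's own proof: invoke Lemma~\ref{lem:contractedLayer1} to get layer~$1$ filled with contracted particles, appeal to the correctness of the leader election of~\cite{DNA} on the resulting (static) cycle of nodes to obtain a leader node, and then argue that the \emph{CHK} flag completes its traversal because all layer-$1$ particles are contracted, yielding a leader particle. Your write-up is in fact more careful than the paper's three-line proof, in that you explicitly justify why the node-based leader-election state is preserved through handovers before~$t_0$; the paper simply cites~\cite{DNA} for this step.
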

\begin{proof}
According to Lemma~\ref{lem:contractedLayer1} layer $1$ is eventually filled with contracted particles.
Leader Election successfully elects a leader node according to~\cite{DNA}.
The contracted particle $p$ occupying the leader node forwards the {\em CHK} flag and eventually receives it back, since all particles are contracted.
Therefore, $p$ becomes a leader.
\end{proof}

Now we are ready to prove the two major statements of this subsection that define two conditions for system progress.

\begin{lemma} \label{lem:liveness1}
If all particles are non-retired and there is either a complaint flag or an
expanded particle, the system eventually makes progress.
\end{lemma}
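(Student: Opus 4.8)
The plan is to argue by a case analysis on the kind of ``obstruction'' that is present, reducing each case to one of the liveness lemmas already established so that in every case one of the three progress events is forced: an idle particle activating (type (i)), a movement (type (ii)), or a retirement (type (iii)).

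First I would dispose of idle and expanded particles. If any idle particle exists, then by Lemma~\ref{idleToNonidle} some idle particle $p$ is adjacent to a non-idle particle or to the object; since no particle is retired, the next activation of $p$ turns it into a follower or root according to Algorithm~\ref{alg:spanningForestAlgorithm}, which is progress of type (i). Hence we may assume every particle is a follower or root. Next, if some particle is expanded, then Lemma~\ref{lem:particlesContract} guarantees it eventually contracts, and the contraction (or any handover forced along the way) is a movement, giving progress of type (ii). It therefore remains to treat the case in which every particle is contracted and active; since the hypothesis then rules out the ``expanded particle'' alternative, a complaint flag must be present.

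For this contracted case I would invoke the structural dichotomy of Lemma~\ref{lem:Trees}: $A(C)$ is either a forest or a ring of trees. In the forest case, the component holding the complaint flag is a tree whose super-root exists and is contracted, so Lemma~\ref{lem:ComplaintLeadsToMovement} applies and the super-root eventually retires or expands into $p.dir$ --- either way, progress. The delicate case is the ring of trees. Here I would first observe that the directed cycle consists of boundary particles which, having no retired neighbor, must all sit on layer~$1$; moreover, for the cycle to close, every node of layer~$1$ must be occupied (a gap would leave the particle whose $dir$ points into it without an outgoing edge in $A(C)$, i.e.\ a super-root, contradicting its membership in a cycle). Since all particles are contracted, this means layer~$1$ is completely filled with contracted particles, so in particular $B \le n$. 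This is exactly the situation fed into Lemma~\ref{lem:leaderOnLayer1}: a leader is eventually elected on layer~$1$, and upon its next activation the first marker condition of Algorithm~\ref{alg:retiredCondition} makes it retire, which is progress of type (iii).

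I expect the ring-of-trees case to be the main obstacle, precisely because it is the one configuration in which neither a movement nor a retirement is immediately available through the complaint mechanism: the complaint flags cannot be consumed (layer~$1$ is full, so no super-root can expand into it) and no super-root exists on the cycle to retire, so the deadlock is broken only by appealing to leader election as the source of the system's \emph{first} retirement. The rest is bookkeeping --- confirming that ``no retired particle'' both licenses the complaint-based lemmas (we are before the first retirement) and forces the cycle onto layer~$1$, and that ``a leader is eventually elected'' converts into an actual progress event after one further activation.
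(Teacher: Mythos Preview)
Your proof is correct and follows essentially the same approach as the paper: dispose of idle particles via Lemma~\ref{lem:idleToActive}, of expanded particles via Lemma~\ref{lem:particlesContract}, and in the residual all-contracted case split into two subcases that resolve via Lemma~\ref{lem:ComplaintLeadsToMovement} and Lemma~\ref{lem:leaderOnLayer1} respectively. The only cosmetic difference is that the paper splits the last case on whether layer~$1$ is completely filled, whereas you split on whether $A(C)$ is a forest or a ring of trees; as you correctly argue, these are equivalent (the ring forces layer~$1$ to be full, and a gap in layer~$1$ forces a super-root), so the two decompositions line up exactly.
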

\begin{proof}
If there is an idle particle, progress is ensured by
Lemma~\ref{lem:idleToActive}. If an active particle is expanded
Lemma~\ref{lem:particlesContract} guarantees progress. Finally, in the last
case all particles are active, none of them is expanded and there is a
complaint flag. If layer $1$ is completely filled, a leader is elected
according to Lemma~\ref{lem:leaderOnLayer1} and as a direct consequence the
active particles on layer $1$ eventually retire, guaranteeing progress. If
layer $1$ is not completely filled, there exists at least one tree of $A(C)$
with a contracted
 super-root $p$ that has an unoccupied node in $p.dir$ and at least
one complaint flag. Therefore, progress is ensured by
Lemma~\ref{lem:ComplaintLeadsToMovement}.
\end{proof}

\begin{lemma} \label{lem:liveness2}
If there is at least one retired particle and one active particle, the system
eventually makes progress.
\end{lemma}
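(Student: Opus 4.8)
The plan is to argue by contradiction: I would assume that from the given configuration $C$ onward the system never makes progress, and then exhibit a particle whose next activation forces progress. First I would peel off the two easy cases. If any idle particle exists, Lemma~\ref{lem:idleToActive} already yields progress (an idle particle eventually becomes active), so I may assume there are no idle particles; and if any active particle is expanded, Lemma~\ref{lem:particlesContract} yields progress (an eventual contraction/handover), so I may further assume every active particle is contracted. Under the no-progress hypothesis the positions of all particles are then frozen and no retirement ever occurs, so the only remaining events are follower-to-root state changes, each of which is settled within one activation by Lemma~\ref{lem:followerRootToBoundary}.

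I would then invoke Lemma~\ref{lem:Trees}: $A(C)$ is either a forest or a ring of trees. Before splitting cases I would record one key observation. Since a retired particle exists, the first particle ever to retire is the leader (Lemma~\ref{lem:NoHole}), which sits on layer~$1$; moreover, by Lemmas~\ref{lem:contractedLayer1} and~\ref{lem:leaderOnLayer1} layer~$1$ was completely filled before the leader was elected, and occupied nodes of layer~$1$ stay occupied. Hence layer~$1$ is full and contains a retired particle, so no directed cycle of $A(C)$ can lie on layer~$1$; any such cycle must live on a layer $m \ge 2$.

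In the \textbf{forest} case, the tree containing some active particle has a super-root $p$, which is contracted; if $p$ is a follower it becomes a root on its next activation (Lemma~\ref{lem:followerRootToBoundary}), so I may take $p$ to be a contracted root. As a super-root, $p$ has no outgoing edge in $A(C)$, i.e.\ the node in direction $p.dir$ is not occupied by an active particle, so it is either occupied by a retired particle or unoccupied. If it is retired, the retired condition of Algorithm~\ref{alg:retiredCondition} retires $p$ on its next activation, which is progress. If it is unoccupied, then $p$ cannot lie on layer~$1$ (layer~$1$ is full and $p.dir$ points to a same-layer node), so $p$ lies on a layer $\ge 2$, and {\sc LayerExtension} (Algorithm~\ref{alg:boundaryDirectionAlgorithm}) lets $p$ expand into $p.dir$ without needing a complaint flag---again progress. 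In the \textbf{ring of trees} case, the directed cycle occupies a complete layer $m \ge 2$. By Lemma~\ref{lem:Trees} every cycle node is connected to a retired particle, and combined with Lemma~\ref{lem:NoHole} this forces layer $m-1$ to be completely filled with retired particles; hence there is a marker particle $q$ on layer $m-1$ whose flag $q.marker$ points to a node $v$ on layer~$m$. The contracted active particle occupying $v$, once it is a root, satisfies the ``Extending Layer Markers'' condition of Algorithm~\ref{alg:retiredCondition} (it is connected to $q$ with $q.marker$ pointing to it, and $q.CW$, $q.CCW$ are retired since layer $m-1$ is complete), so on its next activation it retires---progress. In every case some particle is forced to retire or move on its next activation, and by fairness that activation occurs, contradicting the assumption.

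I expect the main obstacle to be the structural bookkeeping that pins down where progress comes from in each regime: ruling out the complaint-flag dependence in the forest case (which is exactly why I need the observation that layer~$1$ is already full whenever a retired particle exists, so that an empty $p.dir$ forces the super-root onto a layer $\ge 2$), and showing in the ring case that a complete active ring on layer $m$ must sit atop a fully retired layer $m-1$, so that the marker-extension rule can fire. The residual asynchronous subtleties---that frozen positions together with fair activation let us simply wait for the designated particle to act---are routine given Lemmas~\ref{lem:idleToActive},~\ref{lem:particlesContract}, and~\ref{lem:followerRootToBoundary}.
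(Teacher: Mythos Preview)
Your argument is correct and matches the paper's approach: both dispatch idle and expanded particles via Lemmas~\ref{lem:idleToActive} and~\ref{lem:particlesContract}, use the existence of a retired particle to conclude that layer~$1$ is complete, and then split into a super-root/forest case (the super-root either retires or, being on a layer $\ge 2$, expands without needing a complaint flag) and a ring case resolved by the marker-extension rule of Algorithm~\ref{alg:retiredCondition}. The only difference is organizational---the paper first reduces to ``all of layer~$1$ is retired'' before its three-way case split, whereas you fold the layer-$1$ roots into the forest branch; your explicit remark about the complaint flag is a detail the paper leaves implicit.
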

\begin{proof}
Again, if there is an idle particle, progress is ensured by
Lemma~\ref{lem:idleToActive}. Moreover, note that since there is at least one
retired particle, we can conclude that leader election has been successful
(since the first particle that retires is a leader particle) and therefore layer $1$ has to be
completely filled with contracted particles. If there is still a non-retired
particle on layer $1$, it eventually retires according to the Algorithm,
guaranteeing progress.

So suppose that all particles in layer 1 are retired. We distinguish between
the following cases: (i) there exists at least one
 super-root, (ii) no
super-root exists,
but there is an expanded particle, and (iii) no
 super-root exists and all particles
are contracted. In case (i), Lemma~\ref{lem:followerRootToBoundary} guarantees
that a
 super-root will eventually enter 
 root state, and therefore it will eventually either expand (if $p.dir$ is unoccupied) or retire  (since $p.dir$ is
occupied by a retired particle). In case (ii), the particle contracts
according to Lemma~\ref{lem:particlesContract}. In case (iii) $A(C)$ forms a
ring of trees, which can only happen if all boundary particles
completely occupy a single layer, so there is an active particle that occupies
the node adjacent to the marker edge. Since it is contracted by assumption, it retires upon
activation.
Therefore, in all three cases the system
eventually makes progress.
\end{proof}

\subsection{Termination}

Finally, we show that the algorithm eventually terminates in a legal
configuration, i.e., a configuration in which the coating problem is solved.
For the termination we need the following two lemmas.

\begin{lemma} \label{lem:progress-1}
The number of times an idle particle is transformed into an active one and an
active particle is transformed into a retired one is bounded by
$\mathcal{O}(n)$.
\end{lemma}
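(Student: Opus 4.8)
The plan is to exploit the \emph{monotone, one-way} structure of the particle state transitions. The key observation is that the four states admit a natural progression: \emph{idle} precedes both \emph{follower} and \emph{root} (the two active states), which in turn precede \emph{retired}. No transition in the algorithm ever moves a particle ``backward'' in this progression, and in particular no particle ever re-enters the idle state once it has left it, and no particle ever leaves the retired state once it has reached it.

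First I would verify the one-way property by a direct case inspection of Algorithm~\ref{alg:spanningForestAlgorithm} together with the retirement rules in Algorithm~\ref{alg:retiredCondition}. An \emph{idle} particle, upon activation, either stays idle or becomes a \emph{root} or a \emph{follower}; it can never be written back to idle by itself or by a neighbor. A \emph{follower} can only become a \emph{root} (when it is contracted and adjacent to the object or a retired particle); it never reverts to idle. A \emph{root} can only become \emph{retired} (via {\sc MarkerRetiredConditions}); it never reverts to a follower or to idle. Finally, a \emph{retired} particle performs no further action, so \emph{retired} is a terminal (absorbing) state. Hence the transition relation on states is a directed acyclic progression $\text{idle} \to \{\text{follower}, \text{root}\} \to \text{retired}$ with idle as the unique source and retired as the unique sink.

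From this monotonicity the count follows immediately. Because a particle leaves the idle state at most once (it can never return to idle), each of the $n$ particles contributes at most one idle-to-active transition, giving at most $n$ such transitions in total. Likewise, because retired is absorbing, each particle crosses the active-to-retired boundary at most once, giving at most $n$ active-to-retired transitions. Summing the two bounds yields at most $2n = \mathcal{O}(n)$ transitions of the two types combined.

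The only real obstacle is the completeness of the case analysis establishing the one-way property: I must confirm that \emph{no} activation of a particle, nor any action taken by a neighbor writing into the particle's shared memory, can push a particle backward in the state progression. Since neighbors only read flags and initiate handovers (which change expansion status, not the state label), and since every state-changing rule in the pseudocode strictly advances the particle along $\text{idle} \to \text{active} \to \text{retired}$, this check is routine but must be done exhaustively to rule out any hidden reversal.
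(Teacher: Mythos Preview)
Your proposal is correct and follows essentially the same approach as the paper: both arguments rest on the observation that the state transitions are one-way (idle $\to$ active $\to$ retired with no reversals), so each particle contributes at most one transition of each kind. Your write-up is simply a more detailed version of the paper's two-sentence proof.
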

\begin{proof}
From our algorithm it immediately follows that every idle particle can only be
transformed once into an active particle, and every active particle can only
be transformed once into a retired particle. Moreover, a non-idle particle can
never become idle again, and a retired particle can never become non-retired
again, which proves the lemma.
\end{proof}

\begin{lemma} \label{lem:progress-2}
The overall number of expansions, handovers, and contractions is bounded by
$\mathcal{O}(n^2)$.
\end{lemma}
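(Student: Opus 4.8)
The plan is to bound the work contributed by each particle separately and then sum over the at most $n$ particles. Recall from Lemma~\ref{lem:progress-1} that every particle passes through the states idle, follower, root, retired in that order and changes state only $O(1)$ times; an idle particle and a retired particle never move, so it suffices to bound the movements a particle performs while it is a follower and while it is a root. Moreover, every contraction a particle performs can be charged to the expansion of that same particle which necessarily preceded it, and every handover consists of one expansion together with one contraction, so it is enough to bound the number of expansions per particle. I will argue that each particle expands $O(n)$ times, which yields the claimed $O(n^2)$ total.

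First I would handle the \emph{root phase}, which is the cleaner of the two. A root $p$ moves only through {\sc LayerExtension}, and its direction $p.dir$ is fixed by the parity of its layer ($p.CW$ for odd layers, $p.CCW$ for even), so $p$ always advances in a single rotational sense around the object and never reverses. By Lemma~\ref{lem:NoHole}, at any time every layer below the topmost one is already completely filled with retired particles; hence a root can descend from layer $\ell$ to layer $\ell-1$ at most once, namely when it enters one layer above the unique topmost incomplete layer, and it therefore spends its entire root phase on at most two consecutive layers. On each such layer the retired region grows immediately behind $p$ as it advances, so after at most one full revolution the node in direction $p.dir$ is occupied by a retired particle and $p$ retires. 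A full revolution on layer $\ell$ visits exactly the $c_\ell$ nodes of that layer, and once the layer is complete those $c_\ell$ nodes hold $c_\ell$ distinct retired particles; since there are only $n$ particles in total, $c_\ell \le n$. Thus the root phase of any particle contributes $O(n)$ expansions.

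Next I would bound the \emph{follower phase}. A follower moves only via handovers, and in every handover it either plays the role of the contracted child that expands into the node just vacated by its tree-parent, or the role of the expanded parent that contracts out of its tail; in both cases, and because our convention is that a particle always contracts out of its tail, the follower's head advances exactly one step toward the surface along its path in the spanning forest and never steps back. Consequently the follower-phase trajectory of a particle is a walk that ascends its tree without ever revisiting a node, and since the forest spans at most $n$ particles this walk has length $O(n)$, contributing $O(n)$ further expansions. Summing the $O(n)$ follower-phase and $O(n)$ root-phase expansions over the $n$ particles gives $O(n^2)$ expansions, and hence $O(n^2)$ movements overall.

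The step I expect to be the main obstacle is making the follower-phase bound fully rigorous. The spanning forest is built greedily---an awakening idle particle attaches to an arbitrary active neighbor---so a parent need not be closer to the object in graph distance than its child, and a naive potential such as $\sum_{p} d(\mathrm{head}(p), V(O))$ can fail to decrease on a pull-up along a winding tree. The argument must therefore rest on the purely structural facts that a follower is only ever pulled \emph{toward} its parent (never pushed away from it) and that it contracts out of its tail, so that its head performs a monotone, non-revisiting ascent; care is needed to verify that this invariant survives the continual restructuring of the forest as new followers attach and as super-roots advance and retire. Once this is established, the per-particle $O(n)$ bound---and hence the $O(n^2)$ worst-case work---follows by the summation above.
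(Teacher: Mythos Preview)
Your decomposition into a follower phase and a root phase, each contributing $O(n)$ movements per particle, is exactly the paper's strategy. For the follower phase your outline is correct, and the concern you flag is the one the paper addresses: it fixes, at the moment $p$ first becomes active, the directed path $P=(v_0,\ldots,v_m)$ in $A(C)$ from $p$'s head to the first node $v_m$ adjacent to the object or a retired particle, notes that $m\le 2n$, and argues that $p$'s head necessarily traverses $P$ node by node (each handover moves $p$'s head into the node its parent just vacated). Since $v_m$ remains adjacent to the object or a retired particle forever, $p$ becomes a boundary particle no later than upon reaching $v_m$; the restructuring of the forest after time $C$ is therefore irrelevant to the bound.

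Your root-phase argument, however, has a genuine gap at layer~$1$. The claim that ``the retired region grows immediately behind $p$ as it advances'' does not describe the algorithm---retirement propagates from the marker toward the moving roots, not behind them---and, more importantly, on layer~$1$ prior to leader election there are \emph{no} retired particles at all, so nothing in your argument prevents a layer-$1$ root from circling indefinitely. The paper treats layer~$1$ separately: a layer-$1$ root may expand only while holding a complaint flag, at most $n$ complaint flags are ever created, and (Fact~\ref{fact-1}) each super-root expansion triggers a handover cascade in which any given particle participates at most twice; together these yield the per-particle $O(n)$ bound on layer~$1$. A smaller additional point: your bound $c_\ell\le n$ via ``once the layer is complete'' fails for the topmost, possibly never-completed layer; the paper instead uses that layer $\ell-2$ (respectively $\ell-1$) is already complete, hence has at most $n$ nodes, together with the arithmetic relation $B_\ell=B_{\ell-2}+12$ (Fact~\ref{fact-2}).
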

\begin{proof}
We will need the following fact, which immediately follows from our algorithm.

\begin{fact} \label{fact-1}
Only a
 super-root of $A(C)$ can expand to a non-occupied node, and every such
expansion triggers a sequence of handovers, followed by a contraction, in
which every particle participates at most twice.
\end{fact}

Consider any particle $p$. Note that only an active particle performs a
movement. Let $C$ be the first configuration in which $p$ becomes active. If
it is a non-boundary particle (i.e., a follower), then consider the directed path in $A(C)$ from the head of
$p$ to the
 super-root $r$ of its tree or the first particle $r$ belonging to the
ring in the ring of trees. Such a path must exist due to
Lemma~\ref{lem:Trees}. Let $P= (v_0, v_1, \ldots, v_m)$ be a node sequence
covered by this path where $v_0$ is the head of $p$ in $C$ and $v_m$ is the
first node along that path with the object or a retired particle in its
neighborhood. Note that by Lemma~\ref{lem:Trees} such a node sequence is
well-defined since $v_m$ must at latest be a node occupied by $r$. According to
Algorithm~\ref{alg:spanningForestAlgorithm}, $p$ attempts to follow $P$ by
sequentially expanding into the nodes $v_0, v_1, \ldots, v_m$. At latest, $p$
will become a boundary particle once it reaches $v_m$. Up to this point, $p$
has traveled along a path of length at most $2n$, and therefore, the number of
movements $p$ executes as a follower is $\mathcal{O}(n)$.

Now suppose $p$ is a boundary particle. Let $C$ be the configuration in which
$p$ becomes a boundary particle and let $\ell=p.layer$. Suppose that $\ell=1$.
From our algorithm we know that at most $n$ complaint flags are generated by
the particles, and therefore by Lemma~\ref{lem:ComplaintLeadsToMovement},
there are at most $n$ expansions in level 1 (the rest are handovers or
contractions). Hence, it follows from Fact~\ref{fact-1} that $p$ can only move
$\mathcal{O}(n)$ times as a boundary particle.

Next consider the case that $\ell>1$. Here we will need the following
well-known fact.

\begin{fact} \label{fact-2}
Let $B_i$ be the length of layer $i$. For every $i$ and every valid instance
$(P,O)$ allowing $O$ to be coated by $i$ layers it holds that $B_i = B_0 +
6i$.
\end{fact}

If $\ell=2$, there must be a retired particle in layer 1, and since the leader
is the first retired particle, Lemmas~\ref{lem:contractedLayer1} and
\ref{lem:leaderOnLayer1} imply that level $\ell-1$ is completely filled with
contracted particles. So $p$ can only move along nodes of layer $\ell$. Since
$B_{\ell-1} \le n$, it follows from Fact~\ref{fact-2} that $B_{\ell} \le n+6$.
As long as not all particles in level $\ell-1$ are retired, $p$ cannot move
beyond the marker node in level $\ell$. So $p$ either becomes retired before
reaching the marker node, or if it reaches the marker node, it has to wait
there till all particles in level $\ell-1$ are retired, which causes the
retirement of $p$. Therefore, $p$ moves along at most $n+6$ nodes. If
$\ell>2$, we know from Lemma~\ref{lem:NoHole} that level $\ell-2$ is
completely filled with contracted particles. Since $B_{\ell-2} \le n$ and
$B_{\ell} = B_{\ell-2} + 12$, it follows that $B_{\ell} \le n+12$. Hence, $p$
will move along at most $n+12$ nodes in level $\ell$ before becoming retired
or moving to level $\ell-1$, and $p$ will move along at most $n+6$ further
nodes in level $\ell-1$ before retiring.

Thus, in any case, $p$ performs at most $\mathcal{O}(n)$ movements as a
boundary particle. Therefore, the number of movements any particle in the
system performs is $\mathcal{O}(n)$, which concludes the lemma.
\end{proof}

Lemmas~\ref{lem:progress-1} and \ref{lem:progress-2} imply that the system can
only make progress $\mathcal{O}(n^2)$ many times. Hence, eventually our system
reaches a configuration in which it no longer makes progress, so the system
terminates. It remains to show that when the algorithm terminates, it is in a
legal configuration, i.e., the algorithm solves the coating problem.

\begin{theorem}
Our coating algorithm terminates in a legal configuration.
\end{theorem}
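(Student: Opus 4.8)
The plan is to combine the termination argument already established (that the system can make progress only $\mathcal{O}(n^2)$ times, via Lemmas~\ref{lem:progress-1} and~\ref{lem:progress-2}) with the liveness lemmas to argue that the terminal configuration must be legal. First I would invoke the fact that since progress is bounded, the system eventually reaches a configuration $C$ in which no further progress is ever made; by the definition of progress, in $C$ there are no idle particles, no expanded particles, and no active particle that can retire or move. The goal is to show that $C$ satisfies the two requirements of a legal configuration: all particles are contracted, and the particles coat $O$ as evenly as possible (the min/max distance condition from Section~\ref{sec:problem}).

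The core of the argument is a case analysis on $C$ driven by the contrapositives of Lemmas~\ref{lem:liveness1} and~\ref{lem:liveness2}. First I would rule out the existence of idle particles (Lemma~\ref{lem:idleToActive}) and expanded particles (Lemma~\ref{lem:particlesContract}) in $C$, since either would force further progress. Then I would split into two cases according to whether any retired particle exists. If there are no retired particles but some active particle remains, then all particles are non-retired; Lemma~\ref{lem:liveness1} says that if additionally there is a complaint flag or an expanded particle, progress would occur, contradicting stability of $C$. So in this case there can be no complaint flags and no active particles left that can move --- but by Lemma~\ref{lem:complaintTokenEquality}, with no expanded boundary particles and no complaint flags, each connected component of $A(C)$ has no non-boundary particles, meaning every active particle is a boundary particle on layer~$1$. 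I would then argue that $B > n$ here (otherwise a leader would eventually be elected by Lemma~\ref{lem:leaderOnLayer1}, forcing retirements and hence progress), so $n < B$ and the $n$ contracted particles sit on layer~$1$ with no follower left to complain; this is exactly the evenly-coated configuration when $n \le B$. If instead there is at least one retired particle, Lemma~\ref{lem:liveness2} guarantees progress whenever an active particle coexists with a retired one, so in the stable $C$ there can be \emph{no} active particles at all: every particle is retired and contracted.

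Finally I would verify the distance condition. In the all-retired case, Lemma~\ref{lem:NoHole} tells us the retired particles form completely filled layers $1, 2, \ldots, \ell-1$ with possibly a partially filled topmost layer $\ell$; since every particle is retired and contracted and no progress occurs, this is precisely the ``as close to the object as possible'' packing, so every occupied node $v$ has $d(v,V(O)) \le \ell$ while every unoccupied node outside $V(O) \cup V(P)$ has distance at least $\ell$ (as the lower layers are saturated), giving the required inequality $\min_{v \notin V(P)\cup V(O)} d(v,V(O)) \ge \max_{v\in V(P)} d(v,V(O))$. The remaining subtlety, and what I expect to be the main obstacle, is the partially filled topmost layer: I must argue that the particles on layer $\ell$ occupy a \emph{consecutively} filled arc (guaranteed by Lemma~\ref{lem:NoHole}) and that no unoccupied node of a strictly lower layer survives, so that no node strictly closer to $O$ is left empty while a particle sits farther away. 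This requires carefully invoking that the layering primitive fills layer $\ell-1$ completely before any particle retires on layer $\ell$, together with Fact~\ref{fact-2} to confirm that the counting of nodes per layer is consistent with $n$ particles filling whole layers up to the last. Once the consecutiveness and saturation of lower layers are pinned down, the distance inequality follows directly, and combining this with the earlier cases establishes that the terminal configuration is legal.
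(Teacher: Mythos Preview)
Your proposal is correct and mirrors the paper's argument: take the contrapositives of Lemmas~\ref{lem:liveness1} and~\ref{lem:liveness2} to reduce the terminal configuration to either the all-retired case (handled by Lemma~\ref{lem:NoHole}) or the all-active, no-complaint, all-on-layer-$1$ case (handled via Lemma~\ref{lem:complaintTokenEquality}). The one place you diverge is in over-engineering the all-retired case: you flag the partially filled topmost layer as the ``main obstacle'' and plan to invoke Fact~\ref{fact-2}, but the paper simply cites Lemma~\ref{lem:NoHole} and stops, since that lemma already guarantees layers $1,\ldots,\ell-1$ are completely filled, which directly yields $\min_{v\notin V(P)\cup V(O)} d(v,V(O)) \ge \ell \ge \max_{v\in V(P)} d(v,V(O))$ without any layer-size counting.
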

\begin{proof}
From the conditions of Lemmas~\ref{lem:liveness1} and \ref{lem:liveness2} we
know that the following facts must both be true when the algorithm terminates:
\begin{enumerate}
\item At least one particle is retired or there is neither a complaint flag nor
an expanded particle in the system (Lemma~\ref{lem:liveness1}).

\item Either all particles are retired or all particles are active
(Lemma~\ref{lem:liveness2}).
\end{enumerate}
First suppose that all particle are retired. Then it follows from
Lemma~\ref{lem:NoHole} that the configuration is legal. Next, suppose that all
particles are active and neither a complaint flag nor an expanded particle is
left in the system. Then Lemma~\ref{lem:complaintTokenEquality} implies that
there cannot be any non-boundary any more, so all active particles must be
boundary particles. If there is at least one boundary particle in layer
$\ell>1$, then there must be at least one retired particle, contradicting our
assumption. So all boundary particles must be in layer 1, and since there are
no more complaint flags and all boundary particles are contracted, also in
this case our algorithm has reached a legal configuration, which proves our
theorem.
\end{proof}

Recall that the {\em work} performed by an algorithm is defined as the number
of movements (expansions, handovers, and contractions) of the particles till
it terminates. Lemma~\ref{lem:progress-2} implies that the work performed by
our algorithm is $\mathcal{O}(n^2)$. Interestingly, this is also the best
bound one can achieve in the worst-case for the coating problem.

\begin{lemma} \label{lem:worstCase}
The worst-case work required by any algorithm to solve the Universal Object
Coating problem is $\Omega(n^2)$.
\end{lemma}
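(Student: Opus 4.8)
The plan is to prove the bound by exhibiting a single family of valid instances on which every correct algorithm is forced to do $\Omega(n^2)$ work, and to charge that work against the decrease of a simple geometric potential. First I would fix the hard instance: let $O$ be a convex object (e.g.\ a large hexagonal or triangular patch of $\Geqt$) whose surface layer has $B = \Theta(n)$ nodes, say $B = 2n$, and place the $n$ particles as a single straight ``spoke'' of length $n$ attached to one node of the surface and pointing away from $O$, so that the $j$-th particle sits at distance $j$ from $O$ for $j = 1,\ldots,n$. This is a valid instance: the particles are contracted, idle, and connected to $O$, and since $O$ is convex it has no holes and no tunnels, with $\Veqt \setminus V(O)$ being $4$-connected --- which matches property~\ref{item:noHoleProp} because $\lceil n/B\rceil = 1$.

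Next I would introduce the potential $\Phi(C) = \sum_{p} d(\mathrm{head}(p), V(O))$, the sum over particles of the distance from each particle's head to the object. In the initial spoke configuration the heads lie at distances $1,2,\ldots,n$, so $\Phi_{\mathrm{init}} = \sum_{j=1}^{n} j = \Theta(n^2)$. In any legal (evenly coated) configuration all particles are contracted and, since $n \le B$, they all occupy layer~$1$; hence every head is at distance $1$ from $O$ and $\Phi_{\mathrm{final}} = n = O(n)$. More generally, property~\ref{item:noHoleProp} together with the bounded vertex connectivity of $\Geqt$ forces $B = \Omega(n)$, so a legal configuration can never use more than $O(1)$ layers and $\Phi_{\mathrm{final}} = O(n)$ for any valid instance.

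The crux is the amortized claim that a single movement changes $\Phi$ by at most a constant. Here I would use the convention (stated in Section~\ref{subsec:Pre}) that a particle always contracts out of its tail: a contraction therefore leaves the head fixed and does not change $\Phi$ at all, while an expansion moves a single head to an adjacent node and changes its distance to $O$ by at most $1$. A handover is the simultaneous expansion of one particle and contraction of another; since the contracting particle keeps its head, only the expanding particle's head moves, again by at most one unit of distance. Thus each expansion, contraction, or handover --- each counted as a unit of work --- decreases $\Phi$ by at most a constant $c$. Summing over the whole execution gives that the number of movements is at least $(\Phi_{\mathrm{init}} - \Phi_{\mathrm{final}})/c = \Omega(n^2)$, which is the desired lower bound.

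The step I expect to be the main obstacle is arguing cleanly that the instance is valid, specifically property~\ref{item:noHoleProp}: because $\Geqt$ has vertex connectivity at most $6$, the requirement of $2(\lceil n/B\rceil+1)$-connectivity can only be met when $B = \Omega(n)$, so the object must be scaled up with $n$ and chosen convex (or otherwise tunnel-free) to guarantee that $\Veqt \setminus V(O)$ is $4$-connected. The remaining care is in the potential bookkeeping: one must confirm that the tail-contraction convention really does pin the head during every contraction and handover, so that no movement can decrease $\Phi$ by more than $O(1)$; with that in hand the counting argument is immediate.
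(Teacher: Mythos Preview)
Your argument is correct and is essentially the paper's proof: exhibit a straight line of $n$ particles attached to the object, note that the total distance to the object is $\Theta(n^2)$, and observe that each movement covers only $O(1)$ distance. The paper phrases this as a per-particle count (the particle starting at distance $i$ needs at least $2(i-1-\lceil (i-1)/B\rceil)$ movements), whereas you package the same accounting into the potential $\Phi=\sum_p d(\mathrm{head}(p),V(O))$; the potential formulation is arguably cleaner because it avoids having to say which anonymous particle ends up where.

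One small slip to fix: you justify ``a contraction leaves the head fixed'' by citing the convention in Section~\ref{subsec:Pre} that particles always contract out of their tails. That convention is part of the paper's \emph{algorithm}, not of the amoebot model, so it cannot be assumed of an arbitrary coating algorithm in a lower-bound proof. The repair is immediate: even a contraction into the tail moves the head to an adjacent node, so $|\Delta\Phi|\le 1$; and in a handover at most two heads move, each by one step, so $|\Delta\Phi|\le 2$ per movement in any case. The $\Omega(n^2)$ bound survives unchanged. Your side remark that property~\ref{item:noHoleProp} forces $B=\Omega(n)$ (because $\Veqt\setminus V(O)$ can be at most $6$-connected in $\Geqt$) is a nice point that the paper leaves implicit.
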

\begin{proof}
Consider the configuration depicted in Figure~\ref{fig:worstCase}.
    \begin{figure}
        \centering
        \includegraphics{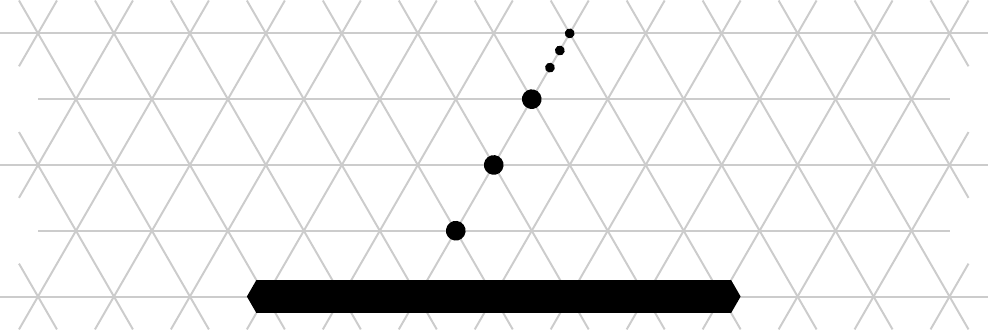}
        \caption{A worst-case configuration concerning work.
            The object is solid black and the non-object particles are black dots .
            Here, all $n$ particles lie on a straight line.
        }
        \label{fig:worstCase}
    \end{figure}
A particle with distance $i\geq 1$ to the object needs at least
$2(i-1-\left\lceil\frac{i-1}{B}\right\rceil)$ movements to become contracted
on its final layer. Therefore, any algorithm requires at least
$2\sum_{i=1}^{n-1} (i-1-\left\lceil \frac{i-1}{B}\right\rceil ) \geq
\sum_{i=1}^{n-1} (i-1-(\frac{i}{B}))=\Omega(n^2)$ work assuming $B \geq 2$.
\end{proof}

Hence, we get:

\begin{theorem}
Our algorithm requires worst-case optimal work $\Theta(n^2)$.
\end{theorem}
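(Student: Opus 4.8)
The plan is to obtain the theorem as an immediate consequence of the two matching bounds already established, observing that the universal upper bound on the work of our particular algorithm coincides with the lower bound that holds for every algorithm. First I would recall that the \emph{work} of an algorithm is defined as the total number of movements---expansions, handovers, and contractions---performed by the particles until termination, so that both bounds refer to the same quantity.

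For the upper bound I would invoke Lemma~\ref{lem:progress-2}, which shows that the overall number of movements executed by our algorithm is $\mathcal{O}(n^2)$ on \emph{any} valid instance $(P,O)$. Since this bound is uniform over all inputs, it holds in particular for the worst case, and hence the worst-case work of our algorithm is $\mathcal{O}(n^2)$.

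For the lower bound I would invoke Lemma~\ref{lem:worstCase}, which exhibits a valid instance---the straight-line configuration of Figure~\ref{fig:worstCase}---on which every correct coating algorithm must perform $\Omega(n^2)$ movements. Because our algorithm is a correct coating algorithm (it terminates in a legal configuration, as shown above), it too must spend $\Omega(n^2)$ work on that instance, so the worst-case work of our algorithm is $\Omega(n^2)$.

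Combining the two bounds gives that the worst-case work of our algorithm is exactly $\Theta(n^2)$. Moreover, since the $\Omega(n^2)$ bound of Lemma~\ref{lem:worstCase} applies to \emph{any} algorithm solving the universal coating problem, no algorithm can do asymptotically better in the worst case, and therefore our algorithm is worst-case work-optimal. I do not expect any genuine obstacle here, as all the technical content resides in Lemmas~\ref{lem:progress-2} and~\ref{lem:worstCase}; the only point requiring a moment of care is to note that it is precisely the \emph{uniform} $\mathcal{O}(n^2)$ upper bound that upgrades the worst-case $\Omega(n^2)$ lower bound to a tight $\Theta(n^2)$ characterization and thereby yields optimality.
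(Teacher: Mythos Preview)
Your proposal is correct and mirrors exactly what the paper does: the theorem is stated immediately after Lemma~\ref{lem:worstCase} with nothing more than ``Hence, we get,'' so the entire content is the combination of the $\mathcal{O}(n^2)$ upper bound from Lemma~\ref{lem:progress-2} with the $\Omega(n^2)$ lower bound from Lemma~\ref{lem:worstCase}. Your write-up is in fact more explicit than the paper's, spelling out why the universal lower bound also applies to our (correct) algorithm and why the uniform upper bound transfers to the worst case.
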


\section{Applications}
\label{sec:applications}
In this section, we present other coating scenarios and applications of our universal coating algorithm.
Our algorithm can be easily extended to also handle the case when one would like to cover only a certain portion of the object surface.  More concretely, assume that one would like to cover the  portion of the object surface delimited by two \emph{endpoint nodes}. Basically in that case, the algorithm can be modified slightly so that the particles that eventually reach one of the endpoints of the surface segment retire and become {\em endpoint markers}. The position of endpoint marker particles will be propagated to higher layers, as necessary, such that the delimited portion of the object is evenly coated.



Once the first layer is formed and a leader is elected (implying that $B\leq n$), one can trivially determine $(i)$ whether the number of particles in the system is greater than or equal to the size of the object boundary, or $(ii)$ whether the object $O$ is convex; one could also potentially address other applications that involve aggregating some (constant-size) collective data over the boundary of the object $O$.
Once all particles in layer 1 retire, a leader will emerge
and that leader can initiate the respective application.
For the first application, all particles may initially assume that $B>n$. Once a leader is elected, it informs all other particles that $B\leq n$. For the convexity testing,
the leader particle can generate a token that traverses the boundary in CW direction: If the token ever makes a left turn (i.e., it traverses two consecutive edges on the boundary at an outer angle of less than $180^{\circ}$), then the object is not convex; otherwise the object is convex.



\section{Conclusion}
\label{sec:concl}

This paper presented a universal coating algorithm for programmable
matter using worst-case optimal work. It would be interesting to also bound the
parallel runtime of our algorithm in terms of number of asynchronous rounds, and to investigate its competitiveness ---
i.e., how does its work or runtime compare to the best possible work or
runtime for any given instance. Moreover, it would be interesting to implement the
algorithm and evaluate its performance either via simulations or hopefully at
some point even via experiments with real programmable matter.

\section*{Acknowledgements}
We would like to thank Joshua Daymude and Alexandra M.\ Porter for fruitful discussions on this topic and for helping us review the manuscript.
\bibliographystyle{elsarticle-num}
\bibliography{literature}

\end{document}